\newtheorem{theorem}{Theorem}
\newtheorem{lemma}{Lemma}
\newcommand*\rel@kern[1]{\kern#1\dimexpr\macc@kerna}
\newcommand*\widebar[1]{%
	\begingroup
	\def\mathaccent##1##2{%
		\rel@kern{0.8}%
		\overline{\rel@kern{-0.8}\macc@nucleus\rel@kern{0.2}}%
		\rel@kern{-0.2}%
	}%
	\macc@depth\@ne
	\let\math@bgroup\@empty \let\math@egroup\macc@set@skewchar
	\mathsurround\z@ \frozen@everymath{\mathgroup\macc@group\relax}%
	\macc@set@skewchar\relax
	\let\mathaccentV\macc@nested@a
	\macc@nested@a\relax111{#1}%
	\endgroup
}
\newcommand{\comout}[1]{}
\def\lpbox#1{\vskip1mm \begin{center}
        \hspace{.0\textwidth}\vbox{\hrule\hbox{\vrule\kern6pt
\parbox{.95\textwidth}{\kern6pt \blue #1 (LP)\vskip6pt}\kern6pt\vrule}\hrule}
        \end{center} \vskip-5mm}
\definecolor{lbcolor}{rgb}{0.95,0.95,0.95}
\definecolor{darkred}{RGB}{150,50,50}
\definecolor{brown}{RGB}{250,100,100}
\definecolor{green}{RGB}{000,150,100}
\definecolor{purple}{RGB}{200,000,250}
\def\blue{\color{blue}}
\title{\large \bf Deep learning for interval-censored failure time data from case-cohort studies}
\author[1]{Yeyu Xiao}
\author[1,*]{Yonghong Long}
\affil[1]{School of Mathematics, Renmin University of China, Beijing, 100872, China}
\date{} 
\begin{document}
\maketitle
\vspace{-15pt}

\makeatletter
\let\thefootnote\relax\footnotetext{*Corresponding author: longyh@ruc.edu.cn}
\makeatother

\vspace{-5pt}
\begin{abstract}
\setstretch{1.5}

Interval-censored data are common in fields such as epidemiology and demography. When the failure event of interest is relatively rare and the collection of covariates is costly, researchers often adopt the case-cohort design to reduce study costs. However, existing studies typically rely on the assumption of linearity in modeling covariates, which may not capture the complex and nonlinear relationships present in real data. To address this limitation, we consider a class of transformation models with unspecified covariate-dependent functions. We propose a sieve maximum weighted likelihood approach for interval-censored data arising from the case-cohort design, which combines deep neural networks with Bernstein polynomials. The method employs a deep neural network to flexibly represent the covariate-dependent function and uses Bernstein polynomials to approximate the cumulative baseline hazard function. We establish the consistency and convergence rate of the proposed estimator and show that the resulting nonparametric deep neural network estimator attains the minimax optimal rate of convergence (up to a polylogarithmic factor). Simulation studies suggest that the proposed method performs well in practice. Finally, we apply the method to a real dataset and use the SHAP (Shapley Additive Explanations) approach to attribute the neural network predictions of the covariate-dependent function to covariates. The results indicate that our method is both accurate and interpretable.\\
\noindent {\it{ Keywords}}: Case-cohort design; Interval-censored data; Neural network; SHAP; Transformation model
\end{abstract}

\newpage
\clearpage

\setstretch{1.5}

\section{Introduction}

Interval-censored failure time data commonly arise in epidemiological, demographic, and biomedical studies. Such data occur when the failure event of interest cannot be observed exactly, but is only known to fall within a certain time interval. Right-censored data, which have been extensively studied, represent a special case of interval-censored data. However, the mechanisms that generate interval censoring are more complex, and the two types differ fundamentally. As a result, many theories and methods developed for right-censored data cannot be directly extended to interval-censored data, making the analysis of the latter more challenging \citep{Sun2006statistical}.

Some failure events are relatively rare, and a large sample size is typically required to yield reliable information about the effects of covariates on such event times \citep{Zeng2014Efficient}. In epidemiological cohort studies, the measurement of some covariates can be difficult or costly, making it impractical to collect them for all subjects. To achieve the same objectives as a cohort study under limited resources, \cite{Prentice1986case} proposed the case-cohort design, in which a random sample (subcohort) is first drawn from the entire cohort, and covariates that are expensive or difficult to obtain are collected only for subjects in the subcohort and those who experience the event of interest. Subsequently, many scholars have conducted in-depth investigations on the case-cohort design, but most of them have focused on right-censored data. In recent years, the case-cohort design based on interval-censored data have also been extensively studied. \cite{Li2011Relative} studied the case-cohort design with current status data. \cite{Zhou2017Case} fitted the proportional hazards model to general interval-censored data from the case-cohort design and developed a sieve weighted likelihood approach using inverse probability weighting. \cite{Du2021Reg} investigated the case-cohort design under informatively interval-censored data, where the censoring mechanism is not independent of the failure time. \cite{Zhou2021Semi} and \cite{Lou2023Semi} considered the case-cohort design for multivariate interval-censored data and generalized it to non-rare events. However, these studies typically restrict the covariate-dependent function to a linear form. In practice, imposing such parametric assumptions may be overly restrictive, as covariate-dependent functions often exhibit complex nonlinear patterns. To capture these complex relationships, a tool capable of flexibly approximating intricate nonlinear functions is required, and deep neural networks offer a powerful solution for this purpose.

Neural networks are functions composed of multiple layers, and each layer performs a linear transformation followed by a nonlinear activation function (e.g., the ReLU function). Shallow neural networks have been shown to approximate any continuous function with arbitrary accuracy \citep{Cybenko1989Approximation, Leshno1993Multilayer}, while deeper architectures can achieve similar performance with fewer parameters \citep{Telgarsky2016benefits}. Moreover, deep neural networks are capable of representing a rich class of functions and can identify low-dimensional structures in high-dimensional data, thus alleviating the curse of dimensionality. The many favorable properties of neural networks have motivated researchers to explore the application of deep learning methods to survival analysis. \cite{Faraggi1995neural} were the first to replace the linear predictor in the Cox proportional hazards model with a nonlinear function output by a shallow neural network. \cite{Katzman2018DeepSurv} developed a Cox proportional hazards deep neural network, called DeepSurv. However, these approaches are limited to right-censored data. For interval-censored data, \cite{Sun2023Neural} proposed a neural network method that innovatively incorporates Bernstein polynomials within the network framework to estimate the cumulative baseline hazard function. Although these methods have achieved considerable progress in applications, their theoretical understanding remains limited. Motivated by recent theoretical advances in deep learning for nonparametric regression, \cite{Zhong2021NEURIPS,Zhong2022Deep} provided theoretical support for deep learning methods applied to right-censored data. In 2021, they considered the deep extended hazard model and derived the consistency and convergence rate of the survival function estimator. In 2022, they studied the deep partially linear Cox model (DPLCM) and developed optimal asymptotic theory for both the parametric and nonparametric components. \cite{Wu2024Deep} considered the DPLCM under current status data and proved the corresponding asymptotic properties. \cite{DU2024Deep} further extended the DPLCM to case II interval-censored data.

Although neural networks exhibit excellent predictive performance, their complex structure and large number of parameters have led them to be regarded as “black-box” models with limited interpretability. Interpreting model predictions is essential for enhancing the credibility of such models. \cite{Zhong2022Deep} categorized covariates into two groups; they modeled the effects of the treatment covariates with a linear function and learned the complex relationships of other covariates through a neural network. However, the study did not analyze the contributions of the covariates modeled by the neural network to the predictions, which limits the interpretability of the model. \cite{Sun2020Genome} applied the LIME (local interpretable model-agnostic explanation) method to interpret the predictions of their DNN-based survival model on test set samples. SHAP is a model interpretation method based on the game-theoretic Shapley value. Within the feature attribution framework, it provides rigorous theoretical guarantees compared to LIME \citep{Lundberg2017unified}. The SHAP method supports local explanations of model predictions and can derive global variable importance by aggregating local contributions. For neural network models, it also allows indirect inference of variable interaction effects. Therefore, applying SHAP facilitates the interpretation of our model’s predictions and helps uncover new patterns from the data.

In this paper, we consider the case-cohort design for interval-censored data as well as the generalized case-cohort design for non-rare events. For the data arising from these designs, we focus on a class of transformation models that encompasses many commonly used models, such as the proportional hazards model and the proportional odds model, thus providing greater flexibility. Restricting the effects of some covariates to a linear form may obscure potential nonlinear relationships and interactions in the data. Therefore, we model the effects of all covariates uniformly as an unknown smooth function. The resulting model comprises two nonparametric components: the time-dependent, infinite-dimensional cumulative baseline hazard function, approximated using a sieve method with Bernstein polynomials; and the covariate-dependent function, modeled flexibly via a neural network. In addition, we construct the likelihood function employing inverse probability weighting (IPW), which can account for the sampling bias induced by the (generalized) case-cohort design. In the theoretical analysis, we establish the consistency and convergence rate of the proposed estimator and show that our nonparametric deep neural network estimator obtains the minimax optimal rate of convergence (up to a polylogarithmic factor). Moreover, we demonstrate the performance of the proposed method through extensive simulation studies. In real data analysis, we employ the SHAP method to attribute the predictions of the neural network that models the covariate-dependent function. This allows us to quantify the global contribution of each covariate and the distribution of their effects on individual predictions, as well as to explore potential interactions among covariates through dependence plots. 

The remainder of this paper is organized as follows. Section~\ref{mde} describes the model, designs and data structure, and introduces the proposed estimation procedure. Section~\ref{theor} discusses the theoretical properties of the resulting estimator. In Section~\ref{simul}, we conduct extensive simulation studies to evaluate the performance of the proposed method. In Section~\ref{applic}, we apply the proposed approach to a children mortality study in Nigeria; the results demonstrate that our method is accurate and interpretable. Section~\ref{concl} provides concluding remarks and discussion. All technical proofs are given in the Appendix.

\section{Model, Data and Estimation}
\label{mde}

The (generalized) case–cohort design is essentially a two-phase sampling design. Consider a cohort study with $n$ independent subjects. For the $i$th subject, the failure time is denoted by $T_{i}$, and $Z_{i}$ represents the associated $p$-dimensional covariate vector. Let $U_{i1}, \ldots, U_{iK_{i}}$ denote the random examination times for subject $i$, satisfying $0 = U_{i0} < U_{i1} < \cdots < U_{iK_{i}} < U_{iK_{i}+1} = \infty$, where the number of examinations $K_{i}$ is a positive integer. We assume that, conditional on the covariates, the examination times are independent of the failure time. Since continuous monitoring is not available, we only know that the failure event for subject $i$ occurs within an observation interval $(L_{i}, R_{i}]$, with $L_{i}= \max\left\{ U_{ik}: U_{ik}< T_{i}, k=0, \ldots, K_{i}\right\}$, $R_{i}= \min\left\{ U_{ik}: U_{ik}\geq T_{i}, k=1, \ldots, K_{i}+1\right\}$. We then define $\Delta_{iL} = I(L_{i} = 0)$ and $\Delta_{iI} = I(L_{i} \neq 0, R_{i} < \infty)$, where $I(\cdot)$ denotes the indicator function. When $\Delta_{iL} = 1$, it indicates that the failure event of subject $i$ occurs before the first examination, so the failure time is left-censored. When $\Delta_{iL} + \Delta_{iI} = 0$, it indicates that the failure event of subject $i$ occurs after the last examination, so the failure time is right-censored. Thus, at Phase I, we observe the interval-censored data for all $n$ subjects,
\begin{equation*}
\left\{L_{i},R_{i},\Delta_{iL},\Delta_{iI} \right\}, i = 1, \ldots , n.
\end{equation*} 

At Phase II, we first draw a subcohort from the study cohort by independent Bernoulli sampling with known probability $p_{s}\in (0,1]$. Let $\zeta_{i} = 1$ indicate that subject $i$ is included in the subcohort, and $\zeta_{i} = 0$ otherwise. Subsequently, we draw a subset of cases from those cases not included in the subcohort (i.e., subjects with $\Delta_{iL} + \Delta_{iI} = 1$ and $\zeta_{i} = 0$) through independent Bernoulli sampling with known probability $p_{c} \in (0,1]$. Let $\xi_{i} = 1$ indicate that subject $i$ is selected into the case subset, and $\xi_{i} = 0$ otherwise. Finally, expensive covariate measurements are performed only for subjects in the subcohort (i.e., $\zeta_{i} = 1$) and those in the case subset (i.e., $\xi_{i} = 1$). For rare failure events, we set $p_{c} = 1$, which means that all cases are selected; this corresponds to the case-cohort design. For non-rare or not-so-rare failure events, we set $p_{c} \in (0,1)$, so that only a subset of cases not included in the subcohort is selected; this corresponds to the generalized case-cohort design. Under the (generalized) case-cohort design, the observed data can be represented as:
\begin{equation*}
O_{i}^{\varphi  } =\left\{ L_{i},R_{i},\Delta_{iL},\Delta_{iI},\varphi  _{i}Z_{i},\varphi  _{i}\right\}, i = 1, \ldots , n.
\end{equation*} 
Here, $\varphi_{i} = 1$ indicates that the covariates of subject $i$ are obtained, and $\varphi_{i} = 0$ otherwise.

Assume that the failure time $T$ follows the transformation model with an unspecified covariate-dependent function. The conditional cumulative hazard function of $T$ given the covariate vector $Z \in \mathbb{R}^{p}$ takes the form:
\begin{equation}
\label{equa1}
\Lambda (t|Z)=G(\Lambda(t)\exp(g(Z))),
\end{equation} 
where $\Lambda$ is an unspecified cumulative baseline hazard function, $g: \mathbb{R}^{p} \to \mathbb{R}$ is an unknown function, and $G$ is a prespecified strictly increasing function. Two-phase sampling induces sampling bias, which can be addressed using inverse probability weighting. To estimate $\theta = (\Lambda, g)$, the inverse probability weighted log-likelihood function takes the form:
\begin{equation}
\label{equa2}
\begin{split}
l_{n}^{w}\left ( \Lambda ,g \right ) &= \sum_{i=1}^n w_i \big\{ \Delta_{iL} \log \left[ 1 - \exp\left(-G\left(\Lambda(R_i) e^{g(Z_i)}\right)\right) \right] \\
&\quad + \Delta_{iI} \log \left[ \exp\left(-G\left(\Lambda(L_i) e^{g(Z_i)}\right)\right) - \exp\left(-G\left(\Lambda(R_i) e^{g(Z_i)}\right)\right) \right] \\
&\quad - (1 - \Delta_{iL} - \Delta_{iI}) G\left(\Lambda(L_i) e^{g(Z_i)}\right) \big\}.
\end{split}
\end{equation}
As in \cite{Zhou2024Improving}, the weight $w_i$ is set as
\begin{equation*}
w_i = \frac{\varphi _i}{\pi_p(\Delta_{iL},\Delta_{iI})} = \frac{\varphi _i}{\left(1-\Delta_{iL}-\Delta_{iI}\right)p_s + \left(\Delta_{iL}+\Delta_{iI}\right)(p_s+(1-p_s)p_c)}.
\end{equation*}

We now turn to the estimation of the unknown functions $g$ and $\Lambda$ in model \eqref{equa1}. We approximate the covariate-dependent function $g$ using a neural network, and briefly introduce the relevant concepts of deep neural networks (DNNs) as function approximation tools. An $(H+1)$-layer DNN with layer width $\boldsymbol{p}$ is a composite function $g : \mathbb{R}^{p_0} \to \mathbb{R}^{p_{H+1}}$, defined recursively as follows:
\begin{equation}
\label{equa3}
\begin{aligned}
&g(z) = W_Hg_H(z) + v_H, \\
&g_H(z) = \sigma(W_{H-1}g_{H-1}(z) + v_{H-1}), \ldots ,g_1(z) = \sigma(W_0 z + v_0),
\end{aligned}
\end{equation}
where $H\in \mathbb{N}_{+}$ denotes the number of hidden layers and $\boldsymbol{p} = \left ( p_0,...,p_H,p_{H+1} \right )\in \mathbb{N}_{+}^{H+2}$ specifies the width of each layer (i.e., the number of neurons). The matrices $W_h\in \mathbb{R}^{p_{h+1}\times p_h}$ and vectors $v_h\in \mathbb{R}^{p_{h+1}}$ (for $h = 0,\ldots,H$) are the parameters of the DNN, where $(W_h)_{i,j}$ represents the weight connecting the $j$th neuron in layer $h$ to the $i$th neuron in layer $h+1$, and $(v_h)_i$ denotes the bias term associated with the $i$th neuron in layer $h+1$. The activation function $\sigma$ is chosen a priori and is applied componentwise to vectors, that is, $\sigma((z_1,\ldots,z_{p_h})^\top) = (\sigma(z_1),\ldots,\sigma(z_{p_h}))^\top$. In this paper, we employ the ReLU activation function: $\sigma(z) = \max\left\{ z, 0\right\}$.

Given $H \in \mathbb{N}_{+}$ and $\boldsymbol{p}\in \mathbb{N}_{+}^{H+2}$, a class of DNN can be expressed as:
\begin{equation*}
\begin{split}
\mathcal{G}(H, \boldsymbol{p}) = \big\{ g : & \text{$g$ is a DNN with $(H + 1)$ layers and width vector $\boldsymbol{p}$ such that } \\
& \max \left\{ \|W_h\|_\infty, \|v_h\|_\infty \right\} \leq  1, \text{ for all } h = 0, \dots, H \big\},
\end{split}
\end{equation*}
where $\left\| \cdot \right\|_{\infty }$ denotes the supremum norm of a matrix or vector. It is well known that deep feedforward networks with fully connected layers often involve a large number of parameters, which may lead to overfitting. Pruning weights can reduce the total number of nonzero parameters and lead to sparse connections across layers. This strategy can mitigate overfitting to some extent \citep{Han2015Learning, Schmidt-Hieber2020deep}. Based on this idea, for $s \in \mathbb{N}_{+}$ and $D > 0$, a class of sparse neural networks can be represented as:
\begin{equation}
\label{equa4}
\mathcal{G}(H, s, \boldsymbol{p}, D) := \left\{ g \in \mathcal{G}(H, \boldsymbol{p}) : \sum_{h=1}^H  \|W_h\|_0 + \|v_h\|_0  \leq s, \|g\|_\infty \leq  D \right\},
\end{equation}
where $\|g\|_\infty$ denotes the supremum norm of the function $g$, and $\|\cdot\|_0$ represents the number of nonzero elements in a matrix or vector.

Next, we consider estimating the unspecified cumulative baseline hazard function $\Lambda \in \mathcal{M}_0$, where $\mathcal{M}_0$ denotes the collection of all bounded and continuous nondecreasing, nonnegative functions over the interval $\left [ c, u \right ]$, with $0 \leq c < u < \infty $. Following \cite{Zhou2017Sieve}, we handle the infinite-dimensional parameter $\Lambda$ using a sieve approach based on Bernstein polynomials. The space $\mathcal{M}_n$ is defined as follows:
\begin{equation}
\label{equa5}
\mathcal{M}_n = \left\{ \Lambda_n(t) = \sum_{k=0}^m \phi_k B_k(t, m, c, u) :  \sum_{k=0}^m |\phi_k| \leq M_n , 0 \leq \phi_0 \leq \cdots \leq \phi_m \right\},
\end{equation}
where $B_k(t, m, c, u)$ represents the Bernstein basis polynomial, which is given by:
\begin{equation*}
B_k(t, m, c, u) = \binom{m}{k} \left( \frac{t - c}{u - c} \right)^k \left( 1 - \frac{t - c}{u - c} \right)^{m-k}, k = 0, \ldots, m,
\end{equation*}
with degree $m = o(n^{\nu})$ for some $\nu \in (0, 1)$.

For simplicity, let $\mathcal{G}=\mathcal{G}(H, s, \boldsymbol{p}, \infty)$. We obtain the estimator $\hat{\theta} = (\hat{\Lambda}_n,\hat{g})$ of $\theta =\left ( \Lambda ,g \right )$ by maximizing the weighted log-likelihood function \eqref{equa2} over the space $\mathcal{M}_n \times \mathcal{G}$
\begin{equation}
\label{equa6}
\hat{\theta} = (\hat{\Lambda}_n,\hat{g}) = \mathop{\operatorname{arg\,max}}\limits_{(\Lambda_n, g) \in \mathcal{M}_n \times \mathcal{G}} l_{n}^{w}\left ( \Lambda ,g \right ).
\end{equation}

\section{Theoretical Analysis}
\label{theor}

We now investigate the asymptotic properties of the proposed estimator $\hat{\theta}$. Some restrictions on the nonparametric function $g$ are first required. A Hölder class of smooth functions with parameters $\alpha$, $B > 0$, and domain $\mathbb{D} \subset \mathbb{R}^r$ is defined as follows:
\begin{equation*}
\mathcal{H}^{\alpha}_{r}(\mathbb{D}, B) = \left\{ g : \mathbb{D} \to \mathbb{R} : \sum_{\boldsymbol{\beta} : |\boldsymbol{\beta}| < \alpha} \|\partial^{\boldsymbol{\beta}} g\|_{\infty} + \sum_{\boldsymbol{\beta} : |\boldsymbol{\beta}| = \lfloor \alpha \rfloor} \sup_{\boldsymbol{x},\boldsymbol{y} \in \mathbb{D}, \boldsymbol{x} \neq \boldsymbol{y}} \frac{|\partial^{\boldsymbol{\beta}} g(\boldsymbol{x}) - \partial^{\boldsymbol{\beta}} g(\boldsymbol{y})|}{\|\boldsymbol{x} - \boldsymbol{y}\|_{\infty}^{\alpha - \lfloor \alpha \rfloor}} \leq B \right\},
\end{equation*}
where $\lfloor \alpha \rfloor$ is the largest integer strictly smaller than $\alpha$, $\partial^{\boldsymbol{\beta}} := \partial^{\beta_1} \ldots \partial^{\beta_r}$ with $\boldsymbol{\beta} = (\beta_1, \ldots, \beta_r)$ and $|\boldsymbol{\beta}| = \sum_{k=1}^{r} \beta_k$. Let $L \in \mathbb{N}$, $\boldsymbol{d} = (d_0, \ldots, d_{L+1}) \in \mathbb{N}^{L+2}_+$, we consider a composite Hölder function:
\begin{equation*}
g = g_L \circ g_{L-1} \circ \cdots \circ g_1 \circ g_0,
\end{equation*}
where $g_l : [a_l, b_l]^{d_l} \to [a_{l+1}, b_{l+1}]^{d_{l+1}}$, $l = 0, \ldots, L$. Denote $g_l = (g_{l1}, \ldots, g_{ld_{l+1}})^\top$ and let $d_{lj}$ be the unique number of features that each $g_{lj}$ depends on. Define $\tilde{d}_l = d_{l1} \lor \cdots \lor d_{ld_{l+1}}$. We further assume that $g$ belongs to a composite smoothness function class:
\begin{equation*}
\begin{aligned}
\mathcal{H}(L, \boldsymbol{\alpha}, \boldsymbol{d}, \boldsymbol{\tilde{d}}, B) = \biggl\{ &g = g_L \circ \cdots \circ g_0 : g_l = (g_{l}, \ldots, g_{ld_{l+1}})^\top \text{and } \\
&g_{lj} \in \mathcal{H}_{\tilde{d}_l}^{\alpha_l}([a_l, b_l]^{\tilde{d}_l}, B), \text{ for some } |a_l|, |b_l| \leq B \biggr\},
\end{aligned}
\end{equation*}
where $\boldsymbol{\alpha} = (\alpha_0, \ldots, \alpha_L) \in \mathbb{R}^{L+1}_+$ and $\boldsymbol{d} = (d_0, \ldots, d_{L+1}) \in \mathbb{N}^{L+2}_+$, $\boldsymbol{\tilde{d}} = (\tilde{d}_0, \ldots, \tilde{d}_L) \in \mathbb{N}^{L+1}_+$ with $\tilde{d_l} \leq d_l$, $l = 0, \ldots, L$. The functions in this class are characterized by two dimensions, $\boldsymbol{d}$ and $\boldsymbol{\tilde{d}}$, with $\boldsymbol{\tilde{d}}$ indicating the intrinsic dimension of the function. Furthermore, we denote $\tilde{\alpha}_l = \alpha_l \prod_{k=l+1}^L (\alpha_k \land 1)$ and $\gamma_n = \max_{l=0,\ldots,L} n^{-\tilde{\alpha}_l/(2\tilde{\alpha}_l + \tilde{d}_l)}$, where $a \land b := \min\{a, b\}$.

For any $\theta_1 = (\Lambda_1, g_1)$ and $\theta_2 = (\Lambda_2, g_2)$, define  
\begin{equation*}
d(\theta_1, \theta_2) = \left\{\| g_1 - g_2 \|_{L^2}^2 + \| \Lambda_1 - \Lambda_2 \|_{2)}^2 \right\}^{1/2},
\end{equation*}
where $\| g_1 - g_2 \|_{L^2}^2 = \mathbb{E}\left\{ g_1(Z) - g_2(Z)\right\}^2$ and $\|\Lambda_1 - \Lambda_2\|_{2}^2 = \mathbb{E}\{\Lambda_1(L) - \Lambda_2(L)\}^2 + \mathbb{E}\{\Lambda_1(R) - \Lambda_2(R)\}^2$. Let $\theta_0 = (\Lambda_0, g_0)$ denote the true value of $\theta$. 

\begin{theorem}
\label{th1}
Assume that Conditions (C1)-(C5) given in the Appendix hold. Then there exists an estimator $\hat{\theta}$ in \eqref{equa6} satisfying $\mathbb{E}\{\hat{g}(Z)\} = 0$, and this estimator converges to $\theta_0$ in probability.
\end{theorem}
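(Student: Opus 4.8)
The plan is to prove consistency by the classical route for sieve $M$-estimators (in the spirit of Wald and of Wong--Shen), which rests on four ingredients: (i) the sieve $\mathcal{M}_n\times\mathcal{G}$ contains a point $\theta_n^*=(\Lambda_n^*,g_n^*)$ with $d(\theta_n^*,\theta_0)\to0$; (ii) the population criterion $M(\theta):=\mathbb{E}\{w\,\ell(\theta;O^{\varphi})\}$, with $\ell(\theta;\cdot)$ the per-subject integrand of \eqref{equa2}, is uniquely maximized at $\theta_0$ under the normalization $\mathbb{E}\{g(Z)\}=0$, and this maximum is well separated; (iii) the uniform law $\sup_{\theta\in\mathcal{M}_n\times\mathcal{G}}|n^{-1}l_n^{w}(\theta)-M(\theta)|\to0$ in probability; and (iv) assembling (i)--(iii). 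A structural fact I would exploit throughout is that the two-phase design gives $\mathbb{E}\{\varphi_i\mid L_i,R_i,\Delta_{iL},\Delta_{iI},Z_i\}=\pi_p(\Delta_{iL},\Delta_{iI})$, hence $\mathbb{E}\{w_i\mid\text{full data}\}=1$ and $M(\theta)$ coincides with the \emph{full-cohort} expected log-likelihood; moreover the weights $w_i$ are bounded and bounded away from $0$ (as $p_s,p_c\in(0,1]$), which is used in (iii).

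\textbf{Step 1 (sieve approximation).} For the baseline hazard, I would combine the smoothness of $\Lambda_0$ (a Condition in the Appendix) with the monotonicity-preserving property of Bernstein approximation to produce, for $m$ of the prescribed order, a $\Lambda_n^*\in\mathcal{M}_n$ with $\|\Lambda_n^*-\Lambda_0\|_{\infty}\to0$. For $g_0\in\mathcal{H}(L,\boldsymbol{\alpha},\boldsymbol{d},\boldsymbol{\tilde{d}},B)$, I would invoke the ReLU-network approximation theorem for composite H{\"o}lder classes (see \citealp{Schmidt-Hieber2020deep}, as used by \citealp{Zhong2022Deep}): with depth $H$, sparsity $s$ and widths $\boldsymbol{p}$ chosen as in the Conditions, there is $g_n^*\in\mathcal{G}(H,s,\boldsymbol{p},\infty)$ with $\|g_n^*-g_0\|_{\infty}\lesssim\gamma_n(\log n)^{\kappa}\to0$. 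Together these give $d(\theta_n^*,\theta_0)\to0$, and since $\ell(\theta;\cdot)$ is continuous and, on the relevant bounded range of $(\Lambda(L),\Lambda(R),g(Z))$, Lipschitz in those arguments, dominated convergence yields $M(\theta_n^*)\to M(\theta_0)$.

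\textbf{Steps 2--3 (identification, uniform convergence, and conclusion).} Writing $S_\theta(t\mid Z)=\exp\{-G(\Lambda(t)e^{g(Z)})\}$, a direct computation should give $M(\theta_0)-M(\theta)=\mathbb{E}\,\mathrm{KL}\big(P_{\theta_0}(\Delta_L,\Delta_I\mid L,R,Z)\,\big\|\,P_\theta(\cdot\mid L,R,Z)\big)\ge0$, with equality iff $S_\theta(\cdot\mid Z)$ and $S_{\theta_0}(\cdot\mid Z)$ agree at $L$ and at $R$ almost surely; since $G$ is strictly increasing and, by the support Condition, $(L,R,Z)$ ranges richly enough, this forces $\Lambda(t)e^{g(Z)}=\Lambda_0(t)e^{g_0(Z)}$ for a.e.\ $(t,Z)$, whence (varying $t$ with $Z$ fixed, then varying $Z$, using $\mathbb{E}\{g(Z)\}=\mathbb{E}\{g_0(Z)\}=0$) $\Lambda=\Lambda_0$ and $g=g_0$. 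Combining this with continuity of $M$ and sequential compactness of the parameter space (the $\Lambda$-parts are uniformly bounded monotone, so a Helly subsequence converges pointwise; the $g$-parts are controlled by their boundedness on the sieve and an $L^2(Z)$ argument) gives the well-separated maximum: $\sup\{M(\theta):d(\theta,\theta_0)\ge\epsilon\}<M(\theta_0)$ for every $\epsilon>0$. For (iii) I would bound the metric entropy of $\{w\,\ell(\theta;\cdot):\theta\in\mathcal{M}_n\times\mathcal{G}\}$ --- a term of order $m\log(\cdot)$ from the Bernstein part and a term of order $sH\log(s)\log(\cdot)$ from the ReLU part via its pseudo-dimension bound, with Lipschitzness of $\ell$ transferring these to the log-likelihood class --- and, using a uniform envelope together with boundedness of the $w_i$, conclude via a maximal inequality that $\mathbb{E}\sup_\theta|n^{-1}l_n^{w}(\theta)-M(\theta)|\lesssim\sqrt{(\text{entropy})/n}\to0$ under the growth-rate Conditions. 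Then $M(\hat\theta)\ge n^{-1}l_n^{w}(\hat\theta)-\sup_\theta|\cdot|\ge n^{-1}l_n^{w}(\theta_n^*)-\sup_\theta|\cdot|\ge M(\theta_n^*)-2\sup_\theta|\cdot|\to M(\theta_0)$ while $M(\hat\theta)\le M(\theta_0)$, so $M(\hat\theta)\to M(\theta_0)$ in probability, and the well-separated maximum yields $d(\hat\theta,\theta_0)\to0$ in probability; the normalization $\mathbb{E}\{\hat g(Z)\}=0$ is obtained by taking, within the invariance class $(\Lambda,g)\mapsto(e^{c}\Lambda,g-c)$ of maximizers of \eqref{equa6}, the representative whose network component has mean zero.

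\textbf{Anticipated main obstacle.} The hard part will be (iii): controlling the uniform deviation over two simultaneously \emph{growing} sieves --- the degree-$m$ Bernstein sieve and the sparse ReLU class whose depth, width and sparsity all diverge --- so that the entropy-over-$n$ term vanishes, while coping with the facts that $\mathcal{G}$ is taken with $D=\infty$ (requiring the preliminary localization that $\hat g$ is sup-norm bounded with probability tending to one) and that the left- and interval-censoring log-terms in \eqref{equa2} are unbounded below, so the bracketing/envelope construction must lean on the Conditions keeping covariates, examination times and baseline-hazard values in a compact range and on the boundedness of the IPW weights. A secondary difficulty will be the compactness and well-separation argument for the nonparametric component $g$ in the $L^2(Z)$ metric.
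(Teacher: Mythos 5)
Your proposal is correct and follows essentially the same route as the paper: sieve approximation of $(\Lambda_0,g_0)$ via Bernstein polynomials and the Schmidt--Hieber ReLU approximation, a uniform law of large numbers over the (bounded) sieve, identifiability/well-separation of the population criterion under the mean-zero normalization, and assembly through a Wald-type argmax consistency theorem (the paper invokes Theorem 5.7 of van der Vaart), with the $\mathbb{E}\{\hat g(Z)\}=0$ representative obtained from the same invariance $(\Lambda,g)\mapsto(e^{c}\Lambda,g-c)$. The only differences are cosmetic: the paper handles your $D=\infty$ concern by restricting to $\mathcal{M}_D\times\mathcal{G}_D$ for a large fixed $D$ and proves the uniform law by bracketing/Glivenko--Cantelli arguments rather than pseudo-dimension bounds.
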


\begin{theorem}
\label{th2}
Assume that Conditions (C1)-(C6) given in the Appendix hold. Then we have that
\begin{equation*}
d(\hat{\theta},\theta_0 )=O_p(n^{-r\nu/2} + \gamma_n \log^2 n)
\end{equation*}
and
\begin{equation*}
\| \hat{g} - g_0 \|_{L^2([0, 1]^p)}=O_p(\gamma_n \log^2 n),
\end{equation*}
where $\nu \in (0, 1)$ such that $m = o(n^{\nu})$ and $r$ is defined in Condition (C5).
\end{theorem}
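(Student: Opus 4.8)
The plan is to treat $\hat\theta$ as a sieve M-estimator of the population weighted log-likelihood, apply a general rate-of-convergence theorem for such estimators (in the style of Theorem~3.4.1 of van der Vaart and Wellner, or of Shen and Wong), and then upgrade the rate of the neural-network component by a separate, localized argument.

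First I would record the facts about the weights that make the empirical-process machinery go through: $w_i$ is uniformly bounded, $0\le w_i\le 1/p_s$, and $\mathbb{E}[w_i\mid\text{Phase-I data}]=1$, so that $n^{-1}\mathbb{E}\{l_n^w(\theta)\}$ equals the \emph{unweighted} population log-likelihood $\mathbb{E}\{l_\theta\}$ for each fixed $\theta$. Hence the relevant population criterion is the Kullback--Leibler-type excess risk $K(\theta,\theta_0):=\mathbb{E}\{l_{\theta_0}-l_\theta\}$, and a second-order Taylor expansion using the boundedness and smoothness Conditions (C1)--(C4) on $G$, $\Lambda_0$, the covariate and examination-time distributions, and the sampling probabilities yields constants $0<c_1\le c_2$ with $c_1 d(\theta,\theta_0)^2\le K(\theta,\theta_0)\le c_2 d(\theta,\theta_0)^2$ for all $\theta$ in a $d$-neighborhood of $\theta_0$; the consistency statement in Theorem~\ref{th1} permits restricting attention to such a neighborhood throughout.

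Next I would bound the sieve approximation error $\inf_{\theta\in\mathcal{M}_n\times\mathcal{G}}d(\theta,\theta_0)$ component-wise: the classical Bernstein-polynomial bound under the smoothness exponent $r$ of Condition (C5) supplies $\Lambda_n^\ast\in\mathcal{M}_n$ with $\|\Lambda_n^\ast-\Lambda_0\|_\infty=O(n^{-r\nu/2})$, and the approximation theory for sparse ReLU networks (Schmidt-Hieber, 2020) applied to the composite Hölder class $\mathcal{H}(L,\boldsymbol{\alpha},\boldsymbol{d},\boldsymbol{\tilde{d}},B)$ --- with depth $H\asymp\log n$, width vector polynomial in $n$, and sparsity $s\asymp n\gamma_n^2\log n$ --- supplies $g_n^\ast\in\mathcal{G}$ with $\|g_n^\ast-g_0\|_\infty=O(\gamma_n)$; put $\theta_n^\ast=(\Lambda_n^\ast,g_n^\ast)$. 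Then I would bound the metric entropy of the sieve: $\log N(\delta,\mathcal{M}_n,\|\cdot\|_\infty)\lesssim m\log(M_n/\delta)$ and, by the standard covering bound for sparse networks, $\log N(\delta,\mathcal{G},\|\cdot\|_\infty)\lesssim sH\log(\bar p/\delta)\asymp s(\log n)\log(n/\delta)$. Since on the relevant neighborhood the integrand of \eqref{equa2} is a Lipschitz function of $(\Lambda(L),\Lambda(R),g(Z))$ with bounded envelope (again by (C1)--(C4)), this transfers to an entropy bound for the centered loss class and gives a modulus of continuity $\phi_n(\delta)\asymp\delta\big(\sqrt{m\log n}+\sqrt{s}\log n\big)$ for $\sqrt n\,(\mathbb{P}_n-\mathbb{P})(l_\theta-l_{\theta_n^\ast})$ over the $d$-ball of radius $\delta$. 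Feeding $\eta_n:=d(\theta_n^\ast,\theta_0)\asymp n^{-r\nu/2}+\gamma_n$ and $\phi_n$ into the rate theorem --- $\epsilon_n$ being the largest solution of $\sqrt n\,\epsilon_n^2\gtrsim\phi_n(\epsilon_n)$ subject to $\epsilon_n\gtrsim\eta_n$ --- and using $m\asymp n^\nu$, $s\asymp n\gamma_n^2\log n$, produces $d(\hat\theta,\theta_0)=O_p(n^{-r\nu/2}+\gamma_n\log^2 n)$, the first display; here one checks that the Bernstein stochastic contribution $\sqrt{m\log n/n}$ is dominated by $n^{-r\nu/2}$ over the admissible range of $\nu$, or else is absorbed in $\gamma_n\log^2 n$.

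Finally, to remove $n^{-r\nu/2}$ from the rate of $\hat g$ I would run a localized basic-inequality argument with $\Lambda$ frozen at $\hat\Lambda_n$: because $\hat g$ maximizes $g\mapsto l_n^w(\hat\Lambda_n,g)$ over $\mathcal{G}$, we have $l_n^w(\hat\Lambda_n,\hat g)\ge l_n^w(\hat\Lambda_n,g_n^\ast)$, so the \emph{same} $\hat\Lambda_n$ occurs on both sides and its $O(n^{-r\nu/2})$ bias cancels at first order; decomposing the population part through $(\Lambda_0,g_0)$ and $(\Lambda_0,\hat g)$ and invoking curvature in the $g$-direction (uniform for $\Lambda$ in the $O_p(\epsilon_n)$-ball around $\Lambda_0$ obtained above), the only surviving deterministic terms are $\|g_n^\ast-g_0\|^2=O(\gamma_n^2)$ and a cross term $\lesssim\|\hat g-g_0\|\cdot\|\hat\Lambda_n-\Lambda_0\|$ that is disposed of by the near-orthogonality of the $g$-score to $\Lambda$-perturbations at $\theta_0$ together with an AM--GM split, while the empirical-process term is now indexed only by $g$ and contributes $O_p(\gamma_n\log^2 n)$. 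Combining yields $c_1\|\hat g-g_0\|_{L^2}^2\le O_p(\gamma_n^2\log^4 n)$, hence $\|\hat g-g_0\|_{L^2([0,1]^p)}=O_p(\gamma_n\log^2 n)$ after passing between $\|\cdot\|_{L^2}$ under the law of $Z$ and Lebesgue measure on $[0,1]^p$ via the density bounds. I expect this last step to be the main obstacle: the substance is exactly the verification that the interaction between the Bernstein-sieve error in $\Lambda$ and the network error in $g$ enters only at second order --- so the joint bottleneck $n^{-r\nu/2}$ does not propagate into $\hat g$ --- which hinges on a careful expansion of the population weighted log-likelihood and on controlling the $g$-indexed empirical process uniformly in $\Lambda$ with the data-dependent $\hat\Lambda_n$ plugged in; the IPW weighting adds bookkeeping but no real difficulty, since the weights are bounded and conditionally mean one.
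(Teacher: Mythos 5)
Your treatment of the first display follows essentially the same route as the paper: bounded IPW weights, a quadratic curvature bound whose lower half is exactly Condition (C6), Bernstein approximation of $\Lambda_0$ at rate $n^{-r\nu/2}$ (Lorentz), the Schmidt--Hieber sparse-ReLU approximation of $g_0$ over the composite H\"older class, an entropy bound of order $s\log(Q/\epsilon)$ for the loss class (the paper uses bracketing entropy via its Lemma~1 rather than covering numbers, but the resulting modulus $\phi_n(\delta)\asymp\delta\sqrt{s\log(Q/\delta)}+n^{-1/2}s\log(Q/\delta)$ is comparable to yours), and Theorem~3.4.1 of van der Vaart and Wellner. So your derivation of $d(\hat\theta,\theta_0)=O_p(n^{-r\nu/2}+\gamma_n\log^2 n)$ is essentially the paper's argument.

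The second display is where you part company with the paper, and where your proposal has a genuine gap. The paper never profiles out $\Lambda$: it applies the rate theorem \emph{centered at the sieve point} $\theta_{n0}=(\Lambda_{n0},g_{n0}^*)$ (both $\mathcal{A}_\delta$ and the set in Lemma~1 are balls around $\theta_{n0}$, and the basic inequality $F_n(\hat\theta_D^*)\ge F_n(\theta_{n0})$ is exact), so its conclusion is $d(\hat\theta^*,\theta_{n0})=O_p(\gamma_n\log^2 n)$; reading off the $g$-component and triangulating with $\|g_{n0}^*-g_0\|_{L^2}=O(\gamma_n\log^2 n)$ gives the $\hat g$ rate with no $n^{-r\nu/2}$ term and no orthogonality input at all. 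Your alternative, a basic inequality in $g$ with $\Lambda$ frozen at $\hat\Lambda_n$, hinges on the asserted ``near-orthogonality of the $g$-score to $\Lambda$-perturbations at $\theta_0$.'' That property is not among Conditions (C1)--(C6), is not established anywhere in your sketch, and is generally false for transformation models with interval censoring: the scores for $g$ and for $\Lambda$ are correlated, which is exactly why least-favorable-direction calculations are nontrivial in this literature. Without it, the cancellation you describe only controls the cross term by $\|\hat\Lambda_n-\Lambda_0\|_2\cdot\|\hat g-g_n^*\|_{L^2}$, and after the AM--GM split the surviving term $\|\hat\Lambda_n-\Lambda_0\|_2^2=O_p(n^{-r\nu}+\gamma_n^2\log^4 n)$ puts $n^{-r\nu/2}$ right back into the bound for $\|\hat g-g_0\|_{L^2}$; the same failure occurs if you simply apply the two-sided quadratic curvature bound on both sides of the frozen-$\Lambda$ inequality. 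You correctly flag this as the main obstacle, but it is precisely the step your proposal does not prove, so as written it does not deliver the second display. (A caveat in the other direction: the paper's own verification that $F(\theta)-F(\theta_{n0})\lesssim-\delta^2$ on $\mathcal{A}_\delta$ quietly absorbs the approximation error $F(\theta_0)-F(\theta_{n0})$, which is legitimate only when that error is of smaller order than $\delta^2$; so the centering device is itself more delicate than the one-line display suggests, but that observation does not repair your orthogonality claim.)
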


\begin{theorem}
\label{th3}
Under Conditions (C2)-(C5) in the Appendix, there exists a constant $0 < C < \infty $, such that
\begin{equation*}
\inf_{\hat{g}} \sup_{(\Lambda_0, g_0)\in\mathcal{M}_0\times\mathcal{H}_0} \mathbb{E}\{\hat{g}(Z) - g_0(Z)\}^2 \geq C\gamma _{n}^{2} ,
\end{equation*}
where the infimum is taken over all possible estimators $\hat{g}$ based on the observed data.
\end{theorem}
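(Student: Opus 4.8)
The plan is to prove this minimax lower bound by the usual information-theoretic route (a Fano-type argument). Since the supremum is over the product set $\mathcal{M}_0\times\mathcal{H}_0$, it is enough to fix one convenient, sufficiently smooth baseline hazard $\Lambda^{*}\in\mathcal{M}_0$ (for instance a linear function bounded away from zero on $[c,u]$) and to lower bound the minimax risk of estimating $g_0$ alone over a finite, well-separated subfamily of $\mathcal{H}_0$; fixing $\Lambda$ can only decrease the risk, so this loses nothing. The argument then rests on the three standard ingredients: (i) a finite family $\{g_{\boldsymbol w}:\boldsymbol w\in\mathcal{W}\}\subset\mathcal{H}_0$ with $\log|\mathcal{W}|\gtrsim M$ and pairwise separation $\|g_{\boldsymbol w}-g_{\boldsymbol w'}\|_{L^2}^{2}\gtrsim\gamma_n^{2}$; (ii) a bound $D_{\mathrm{KL}}(P^{\otimes n}_{g_{\boldsymbol w}}\,\|\,P^{\otimes n}_{g_{\boldsymbol w'}})\le C\,n\gamma_n^{2}$ on the divergence between the laws of the $n$ observed case--cohort data vectors under $g_{\boldsymbol w}$ and $g_{\boldsymbol w'}$, with $n\gamma_n^{2}\asymp M$; and (iii) Fano's inequality, which then gives $\inf_{\hat g}\max_{\boldsymbol w\in\mathcal{W}}\mathbb{E}\|\hat g-g_{\boldsymbol w}\|_{L^2}^{2}\gtrsim\gamma_n^{2}$ and hence the theorem.

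\textbf{Constructing the hypotheses.}
Let $l^{*}$ be an index at which the maximum defining $\gamma_n=\max_{0\le l\le L}n^{-\tilde\alpha_l/(2\tilde\alpha_l+\tilde d_l)}$ is attained, so $\gamma_n=n^{-\tilde\alpha_{l^{*}}/(2\tilde\alpha_{l^{*}}+\tilde d_{l^{*}})}$, and set $h=n^{-1/(2\tilde\alpha_{l^{*}}+\tilde d_{l^{*}})}$, $M=\lfloor h^{-\tilde d_{l^{*}}}\rfloor$, so that $h^{2\tilde\alpha_{l^{*}}}=\gamma_n^{2}$ and $n h^{2\tilde\alpha_{l^{*}}}=M$. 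I would then invoke the least-favorable perturbation construction of \cite{Schmidt-Hieber2020deep} for composite Hölder classes: starting from a fixed element of $\mathcal{H}_0$, tile the active $\tilde d_{l^{*}}$-dimensional block of the $l^{*}$-th layer by $M$ disjoint sub-cubes of side $h$, superimpose on the $j$-th one a rescaled smooth bump indexed by $w_j\in\{0,1\}$, and compose with the fixed remaining layers. Because layers $l^{*}+1,\dots,L$ are Hölder with contraction exponents $\alpha_k\land1$, the induced perturbation in $g$ has size of order $h^{\tilde\alpha_{l^{*}}}$ (this is precisely why $\tilde\alpha_{l^{*}}$ rather than $\alpha_{l^{*}}$ governs the rate), and after a fixed rescaling of the amplitudes the resulting $g_{\boldsymbol w}$ lies in $\mathcal{H}(L,\boldsymbol{\alpha},\boldsymbol{d},\boldsymbol{\tilde{d}},B)$; recentering each $g_{\boldsymbol w}$ so that $\mathbb{E}\{g_{\boldsymbol w}(Z)\}=0$ is a negligible shift that keeps it in the class. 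By the Varshamov--Gilbert lemma there is $\mathcal{W}\subset\{0,1\}^{M}$ with $|\mathcal{W}|\ge2^{M/8}$ and mutual Hamming distance at least $M/8$; since $Z$ has a density bounded below on a compact support (Conditions (C2)--(C3)) and the $M$ bumps have disjoint supports, each differing coordinate contributes an amount of order $\gamma_n^{2}/M$ to $\|g_{\boldsymbol w}-g_{\boldsymbol w'}\|_{L^2}^{2}$, so $\|g_{\boldsymbol w}-g_{\boldsymbol w'}\|_{L^2}^{2}\gtrsim\gamma_n^{2}$ for all distinct $\boldsymbol w,\boldsymbol w'\in\mathcal{W}$.

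\textbf{Controlling the information and concluding.}
Write $P_g$ for the law of one observation $O^{\varphi}=\{L,R,\Delta_{L},\Delta_{I},\varphi Z,\varphi\}$ when the baseline hazard is $\Lambda^{*}$ and the covariate-dependent function is $g$. Conditionally on $Z$, the examination times, and the design variables $(\zeta,\xi,\varphi)$ — whose joint law does not depend on $g$ — the observation reduces to a censoring-cell multinomial with probabilities built from $\exp(-G(\Lambda^{*}(\cdot)e^{g(Z)}))$. Conditions (C2)--(C5) imply that $G$, $G'$ and $\Lambda^{*}$ are bounded and bounded away from their degenerate values on the relevant compacts, so every cell probability is bounded away from $0$ and $1$ and $u\mapsto\exp(-G(\Lambda^{*}(t)e^{u}))$ is Lipschitz uniformly in $t$; hence the conditional KL between the two multinomials is $\lesssim|g_{\boldsymbol w}(Z)-g_{\boldsymbol w'}(Z)|^{2}$. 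Taking expectations over $Z$ — and using that the subsampling probabilities $p_s$ and $p_s+(1-p_s)p_c$ are bounded away from $0$, so the two-phase sampling and the inverse-probability weights attenuate information only by a bounded factor — gives $D_{\mathrm{KL}}(P_{g_{\boldsymbol w}}\|P_{g_{\boldsymbol w'}})\lesssim\|g_{\boldsymbol w}-g_{\boldsymbol w'}\|_{L^2}^{2}\lesssim\gamma_n^{2}$, and by independence across subjects $D_{\mathrm{KL}}(P^{\otimes n}_{g_{\boldsymbol w}}\|P^{\otimes n}_{g_{\boldsymbol w'}})\lesssim n\gamma_n^{2}=M\asymp\log|\mathcal{W}|$. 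Tuning the constants so that this divergence is at most $\tfrac{1}{16}\log|\mathcal{W}|$, Fano's inequality yields $\inf_{\hat g}\sup_{(\Lambda_0,g_0)\in\mathcal{M}_0\times\mathcal{H}_0}\mathbb{E}\{\hat g(Z)-g_0(Z)\}^{2}\ge\inf_{\hat g}\max_{\boldsymbol w\in\mathcal{W}}\mathbb{E}\|\hat g-g_{\boldsymbol w}\|_{L^2}^{2}\ge C\gamma_n^{2}$.

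\textbf{Main obstacle.}
The delicate step will be the KL bound under the actual interval-censoring and case--cohort mechanism: one must verify that the censoring-cell probabilities remain uniformly bounded away from $0$ and $1$ (this is exactly where Conditions (C2)--(C5) — compact time and covariate supports, a density bounded below, boundedness of $G$ and $G'$ — are needed), and that the two-phase Bernoulli subsampling and IPW reweighting cannot inflate the information, so that the clean $O(\|g-g'\|_{L^2}^{2})$ divergence bound survives with a constant free of $n$ and of the hypothesis index. Keeping the perturbed $g_{\boldsymbol w}$ inside $\mathcal{H}(L,\boldsymbol{\alpha},\boldsymbol{d},\boldsymbol{\tilde{d}},B)$ with the prescribed constants is the other technical point, but it is inherited directly from \cite{Schmidt-Hieber2020deep}; here it matters that $\Lambda^{*}$ is frozen and $G$ is known, so the composite structure of $g$ is preserved exactly.
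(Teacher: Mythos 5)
Your overall route is the same as the paper's: restrict attention to a single (normalized) baseline hazard, build a Schmidt--Hieber-type family of well-separated composite-H\"older hypotheses for $g$, bound the Kullback--Leibler divergence of the data law by a constant times $n\|g^{(1)}-g^{(0)}\|_{L^2}^2$ (using that the censoring-cell probabilities are bounded away from $0$ and $1$ under (C3)--(C5); the paper does this via $h(x)=x\log x-x+1\le (x-1)^2$ and the mean value theorem, which is equivalent in substance to your conditional-multinomial Lipschitz argument, the conditioning on $Z$ going the right way by convexity of KL), and conclude with the Fano/Tsybakov multiple-hypothesis bound. Your aside about the IPW weights is immaterial, since the weights enter the estimator rather than the data distribution, but it does no harm.

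The one place where you genuinely diverge --- and where your argument as written has a gap --- is the handling of the constraint $\mathbb{E}\{g(Z)\}=0$ that defines $\mathcal{H}_0$. You recenter each hypothesis $g_{\boldsymbol w}$ and call the shift ``negligible,'' but centering replaces $\|g_{\boldsymbol w}-g_{\boldsymbol w'}\|_{L^2}^2$ by $\|g_{\boldsymbol w}-g_{\boldsymbol w'}\|_{L^2}^2-(\mu_{\boldsymbol w}-\mu_{\boldsymbol w'})^2$ with $\mu_{\boldsymbol w}=\mathbb{E}\{g_{\boldsymbol w}(Z)\}$, and in the worst case (of order $M$ differing bumps all contributing with the same sign, density near its upper bound) one has $(\mu_{\boldsymbol w}-\mu_{\boldsymbol w'})^2\asymp (Mh^{\tilde\alpha_{l^*}+\tilde d_{l^*}})^2\asymp\gamma_n^2$, i.e.\ of the same order as the separation itself; so the claimed $\gtrsim\gamma_n^2$ separation after recentering does not follow without an additional argument (e.g.\ bumps chosen so that the per-bump mean is small relative to the per-bump $L^2$ norm). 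The paper sidesteps this entirely through the model invariance $(\Lambda_1,g_1)\mapsto(\Lambda_1 e^{\mu},g_1-\mu)$: the two parameter values generate the same data law, so any estimator for the centered problem is simultaneously an estimator for the uncentered one with identical error, which yields inequality (A.7) and reduces the minimax bound to the unconstrained class $\mathcal{H}_1=\mathcal{H}(L,\boldsymbol\alpha,\boldsymbol d,\tilde{\boldsymbol d},B/2)$, where Schmidt--Hieber's hypotheses can be used untouched. Either adopt that reduction or quantify the effect of centering on the pairwise separation; as it stands, ``negligible shift'' is the missing step.
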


Theorem~\ref{th1} establishes the asymptotic consistency of the proposed estimator $\hat{\theta}$. Theorem~\ref{th2} provides the convergence rate for $\hat{\theta}$, including its DNN-based component $\hat{g}$. Theorem~\ref{th3} further establishes the minimax lower bound for estimating $g_0$, indicating that the DNN-based estimator is rate optimal (up to a polylogarithm factor). We present the proofs of these theorems and their required regularity conditions in the Appendix.

\section{A Simulation Study}
\label{simul}

In this section, we present several results from simulation studies to evaluate the finite-sample performance of the proposed method. We assumed the covariate vector $Z=(Z_1,\ldots,Z_5)^\top$, where $Z_1$ followed the Bernoulli distribution with success probability 0.5, and $(Z_2, Z_3, Z_4)^\top$ followed the trivariate normal distribution with mean zero, variance one, and correlation $\text{Corr}(Z_i,Z_j)=0.5^{|i-j|}$ for $i,j \in {2,3,4}$. Subsequently, each component of $(Z_2, Z_3, Z_4)^\top$ was truncated to the interval $[0,2]$. In addition, $Z_5$ followed the uniform distribution on the interval $[0,1]$. The failure time $T$ was assumed to follow the transformation model \eqref{equa1}, with the cumulative baseline hazard function $\Lambda(t)=0.1t^{2}$. For the transformation function $G$, we considered the class of logarithmic transformations defined as $G(x)=\log(1+rx)/r$ for $r \geq 0$. In our simulations, we examined three cases with $r=0, 0.5,$ and $1$. Setting $r=0$ yields $G(x)=x$, under which the transformation model reduces to the proportional hazards model. When $r=1$, we have $G(x)=\log(1 + x)$, which corresponds to the proportional odds model. For the covariate-dependent function $g(Z)$, we considered the following three settings:
\begin{align*}
\text{Case 1 (Linear):} \quad & g(Z) = z_1 - 0.3 z_2 - 0.3 z_3 + 0.6 z_4 - 0.5 z_5 - 0.25, \\
\text{Case 2 (Deep 1):} \quad & g(Z) = \frac{z_1 z_2^2}{3} + \log(z_3 + 1) + \sqrt{z_3 z_4} + \frac{\exp(z_5)}{3} - 1.18, \\
\text{Case 3 (Deep 2):} \quad & g(Z) = \frac{\left( \frac{z_1 z_2^2}{3} + \log(z_3 + 1) + \sqrt{z_3 z_4} + \frac{\exp(z_5)}{3} \right)^2}{4} - 0.53.
\end{align*}
Various intercept terms, $0.25$, $1.18$, and $0.53$, were added to $g$ such that the condition $\mathbb{E}\left\{ g(Z)\right\} = 0$ holds for all covariate settings.

Interval-censored data were generated by mimicking real follow-up studies. Specifically, it was assumed that all $n$ subjects were scheduled to receive $k$ equally spaced visits within the time interval $[0,\tau]$, at times $\tau_1,\ldots,\tau_k$, with spacing $td=\tau/(k+1)$. In practice, each subject might advance, delay, or miss certain visits. Thus, for the $i$th subject, the actual visit times were given by $ \left\{ (\tau_{j} + \varepsilon _{ij})\psi_{ij},j=1,\ldots,k\right\}$ for $i=1,\ldots,n$, where $\varepsilon_{ij}$ were i.i.d. uniform random variables on $\left [ -td/3,td/3\right ]$, representing deviations in visit timing, and $\psi_{ij}$ were i.i.d. $\text{Bernoulli}(0.8)$ random variables, indicating that each scheduled visit was attended with probability $0.8$. If subject $i$ was found to have experienced the failure event at the first visit, then $L_i=0$, $R_i$ was set to the first visit time, and $(\Delta_{iL}, \Delta_{iI})=(1,0)$. If subject $i$ was observed to experience the event at any subsequent visit, $L_i$ was defined as the previous visit time, $R_i$ as the current visit time, and $(\Delta_{iL}, \Delta_{iI})=(0,1)$. If subject $i$ had not experienced the event by the last visit, $L_i$ was taken as the time of the last visit, $R_i=\infty$, and $(\Delta_{iL}, \Delta_{iI})=(0,0)$. The number of scheduled visits was fixed at $k=10$. By varying the study end time $\tau$, different event rates $p_e$ were obtained, with $p_e=0.1$, $0.2$, and $0.3$ considered.

We considered datasets with sample sizes of $2000$ and $3000$. For each dataset, a stratified split of $9{:}1$ was used to divide the data into training and testing sets, preserving the original distribution of cases. The training set served as the study cohort. Case-cohort studies for event rates $p_e=0.1$ or $0.2$ and generalized case-cohort studies for $p_e=0.2$ or $0.3$ were considered. In both designs, the subcohort was selected via Bernoulli sampling with probability $p_s=0.2$. In the generalized case-cohort design, the subset of cases outside the subcohort was selected using Bernoulli sampling with probability $p_c=0.5$.

We implemented the estimator in equation \eqref{equa6} using PyTorch \citep{paszke2019pytorch}. Following \cite{Sun2023Neural}, a custom BPNet was constructed to implement Bernstein polynomials within a neural network framework. This was combined with a deep neural network used to approximate the covariate-dependent function, and both networks were trained simultaneously. The loss function for the full model was defined as the negative weighted log-likelihood function, and parameters were updated using the Adam optimizer \citep{kingma2014adam}.

Hyperparameters are model parameters that must be specified prior to training. In the simulation study, they included the batch size, the number of hidden layers $H$ and the number of neurons $p_h$ in each hidden layer, the dropout rate \citep{srivastava2014dropout}, the learning rate \citep{Goodfellow2016Deep} for BPNet, and the learning rate for the covariate network. For simplicity, the number of neurons was assumed to be the same in each hidden layer (i.e., $p_i = p_j$ for $1 \le i,j \le H$). Before each simulation run, an additional dataset of equivalent size was generated for hyperparameter tuning. A grid search combined with ten-fold cross-validation was performed, using the negative log-likelihood as the evaluation metric. Early stopping \citep{Goodfellow2016Deep} was applied to monitor network training and the hyperparameter combination that achieved the best average performance across the ten folds, along with the corresponding average number of training epochs, was selected for the final simulation. The parameters of the DNN used to estimate the covariate function were initialized using PyTorch's default random initialization. In all simulations, the interval $[c,u]$ in the Bernstein polynomial was set to $[0,\tau]$, with degree $m=5$.

The proposed method was applied to fit the model \eqref{equa1} using the (generalized) case-cohort sample, referred to as PRO. In addition, a sieve likelihood approach that combined DNN with Bernstein polynomials was used to fit the model \eqref{equa1} based on the subcohort sample and on a simple random sample of the same size as the (generalized) case-cohort sample, denoted by SUB and SRS, respectively. Furthermore, a linear transformation model was fitted to the (generalized) case-cohort sample, called LTM. The performance of these methods in estimating the covariate-dependent function $g$ was evaluated using the relative error (RE), defined as:
\begin{equation*}
\text{RE}(\hat{g}) = \left\{ \frac{\frac{1}{n_1} \sum_{i=1}^{n_1} \{ [\hat{g}(Z_i) - \bar{\hat{g}}] - g(Z_i) \}^2 }{\frac{1}{n_1} \sum_{i=1}^{n_1} [g(Z_i)]^2} \right\}^{1/2},
\end{equation*}
where $\hat{g}$ and $g$ are evaluated on the covariates of the test set $\left\{ Z_i:i = 1,\ldots,n_1\right\}$, $n_1$ denotes the sample size of the test set, and $\bar{\hat{g}} = \sum_{i=1}^{n_1} \hat{g}(Z_i)/n_1$. The mean of $\hat{g}$ on the test set was subtracted because the solution of maximizing the weighted log-likelihood is only unique up to a constant. The mean squared prediction error (MSPE) was further used to compare the predictive accuracy of the survival function across different methods, defined as:
\begin{equation*}
L(\widehat{S}) = \frac{1}{n_1} \sum_{i=1}^{n_1} \frac{1}{a-b} \int_b^{a} \{ S(t|Z_i) - \widehat{S}(t|Z_i) \}^2 \, dt,
\end{equation*}
where $a$ and $b$ denote the maximum finite value and the minimum value of all observed $\{L_i, R_i\}$, respectively, and $n_1$ represents the sample size of the test set. The MSPE essentially measures the average $L_2$ distance between the estimated and the true survival functions in the test set. Smaller values of both metrics indicate better model performance. All results were obtained from $1000$ repetitions.

\begin{table}[htbp]
\centering
\caption{\label{table1}Relative errors (standard deviations) of $\hat{g}$ for different methods}
\resizebox{6.6in}{!}{
\begin{tabular}{ccccccccccccccc}
\toprule
\multirow{2}{*}{} & \multirow{2}{*}{$p_{\text{e}}$} & \multirow{2}{*}{$n$} 
& \multicolumn{4}{c}{Case 1} 
& \multicolumn{4}{c}{Case 2} 
& \multicolumn{4}{c}{Case 3} \\
\cmidrule(lr){4-7} \cmidrule(lr){8-11} \cmidrule(lr){12-15}
 &  &  & PRO & SUB & SRS & LTM 
    & PRO & SUB & SRS & LTM 
    & PRO & SUB & SRS & LTM \\
\midrule
\multicolumn{15}{c}{$G(x)=x$} \\ 
Case-cohort & 0.1 & 2000 & 0.382 & 0.632 & 0.544 & {\bf 0.343} & {\bf 0.388} & 0.524 & 0.480 & 0.424 & {\bf 0.464} & 0.602 & 0.548 & 0.604 \\
study       &     &      & (0.106) & (0.200) & (0.178) & (0.116) & (0.074) & (0.118) & (0.104) & (0.076) & (0.086) & (0.140) & (0.122) & (0.110) \\
            &     & 3000 & 0.319 & 0.516 & 0.468 & {\bf 0.271} & {\bf 0.319} & 0.410 & 0.381 & 0.388 & {\bf 0.402} & 0.487 & 0.460 & 0.577 \\
            &     &      & (0.076) & (0.153) & (0.132) & (0.088) & (0.046) & (0.065) & (0.062) & (0.054) & (0.060) & (0.087) & (0.074) & (0.082) \\
            
            & 0.2 & 2000 & 0.292 & 0.443 & 0.352 & {\bf 0.273} & {\bf 0.317} & 0.420 & 0.353 & 0.378 & {\bf 0.374} & 0.444 & 0.398 & 0.517 \\
            &     &      & (0.075) & (0.133) & (0.097) & (0.088) & (0.055) & (0.088) & (0.068) & (0.053) & (0.053) & (0.068) & (0.056) & (0.064) \\
            &     & 3000 & 0.269 & 0.411 & 0.319 & {\bf 0.223} & {\bf 0.292} & 0.360 & 0.319 & 0.355 & {\bf 0.340} & 0.434 & 0.381 & 0.500 \\
            &     &      & (0.061) & (0.108) & (0.080) & (0.072) & (0.038) & (0.052) & (0.041) & (0.041) & (0.048) & (0.070) & (0.066) & (0.053) \\
            
Generalized & 0.2 & 2000 & 0.328 & 0.450 & 0.393 & {\bf 0.303} & {\bf 0.339} & 0.422 & 0.384 & 0.384 & {\bf 0.391} & 0.447 & 0.417 & 0.525 \\
case-cohort &     &      & (0.088) & (0.135) & (0.114) & (0.101) & (0.061) & (0.094) & (0.079) & (0.057) & (0.058) & (0.075) & (0.066) & (0.073) \\
study       &     & 3000 & 0.286 & 0.394 & 0.338 & {\bf 0.247} & {\bf 0.304} & 0.366 & 0.331 & 0.363 & {\bf 0.359} & 0.442 & 0.399 & 0.506 \\
            &     &      & (0.071) & (0.111) & (0.092) & (0.080) & (0.047) & (0.064) & (0.058) & (0.044) & (0.054) & (0.080) & (0.067) & (0.057) \\
            
            & 0.3 & 2000 & 0.301 & 0.385 & 0.324 & {\bf 0.263} & {\bf 0.321} & 0.391 & 0.335 & 0.363 & {\bf 0.378} & 0.454 & 0.389 & 0.494 \\
            &     &      & (0.078) & (0.110) & (0.089) & (0.089) & (0.058) & (0.079) & (0.061) & (0.049) & (0.058) & (0.083) & (0.061) & (0.056) \\
            &     & 3000 & 0.263 & 0.331 & 0.272 & {\bf 0.216} & {\bf 0.283} & 0.343 & 0.296 & 0.346 & {\bf 0.343} & 0.423 & 0.360 & 0.475 \\
            &     &      & (0.059) & (0.082) & (0.065) & (0.068) & (0.042) & (0.060) & (0.046) & (0.037) & (0.049) & (0.075) & (0.058) & (0.040) \\
\multicolumn{15}{c}{$G(x)=2\log(1 + x/2)$} \\ 
Case-cohort & 0.1 & 2000 & 0.371 & 0.645 & 0.550 & {\bf 0.342} & {\bf 0.389} & 0.582 & 0.510 & 0.416 & {\bf 0.449} & 0.631 & 0.563 & 0.583 \\
study       &     &      & (0.102) & (0.216) & (0.180) & (0.117) & (0.076) & (0.138) & (0.116) & (0.071) & (0.082) & (0.154) & (0.130) & (0.103) \\
            &     & 3000 & 0.315 & 0.533 & 0.473 & {\bf 0.281} & {\bf 0.330} & 0.458 & 0.416 & 0.384 & {\bf 0.417} & 0.581 & 0.533 & 0.555 \\
            &     &      & (0.082) & (0.146) & (0.125) & (0.089) & (0.054) & (0.094) & (0.084) & (0.053) & (0.074) & (0.132) & (0.112) & (0.077) \\
            
            & 0.2 & 2000 & 0.312 & 0.471 & 0.370 & {\bf 0.288} & {\bf 0.332} & 0.438 & 0.376 & 0.378 & {\bf 0.397} & 0.526 & 0.436 & 0.509 \\
            &     &      & (0.079) & (0.142) & (0.100) & (0.093) & (0.055) & (0.088) & (0.067) & (0.055) & (0.063) & (0.105) & (0.073) & (0.061) \\
            &     & 3000 & 0.256 & 0.383 & 0.302 & {\bf 0.231} & {\bf 0.298} & 0.397 & 0.329 & 0.355 & {\bf 0.359} & 0.483 & 0.403 & 0.493 \\
            &     &      & (0.060) & (0.106) & (0.079) & (0.072) & (0.046) & (0.078) & (0.058) & (0.041) & (0.052) & (0.086) & (0.066) & (0.049) \\
            
Generalized & 0.2 & 2000 & 0.339 & 0.460 & 0.395 & {\bf 0.323} & {\bf 0.361} & 0.408 & 0.379 & 0.390 & {\bf 0.428} & 0.520 & 0.464 & 0.524 \\
case-cohort &     &      & (0.091) & (0.134) & (0.113) & (0.104) & (0.050) & (0.068) & (0.061) & (0.062) & (0.075) & (0.107) & (0.082) & (0.073) \\
study       &     & 3000 & 0.288 & 0.390 & 0.340 & {\bf 0.265} & {\bf 0.312} & 0.381 & 0.347 & 0.366 & {\bf 0.390} & 0.496 & 0.445 & 0.496 \\
            &     &      & (0.074) & (0.114) & (0.097) & (0.083) & (0.050) & (0.070) & (0.063) & (0.047) & (0.065) & (0.100) & (0.082) & (0.051) \\
            
            & 0.3 & 2000 & 0.319 & 0.414 & 0.339 & {\bf 0.290} & {\bf 0.323} & 0.390 & 0.337 & 0.367 & {\bf 0.442} & 0.558 & 0.457 & 0.496 \\
            &     &      & (0.077) & (0.114) & (0.088) & (0.092) & (0.056) & (0.076) & (0.060) & (0.050) & (0.083) & (0.121) & (0.086) & (0.058) \\
            &     & 3000 & 0.287 & 0.372 & 0.306 & {\bf 0.233} & {\bf 0.296} & 0.357 & 0.309 & 0.349 & {\bf 0.381} & 0.479 & 0.403 & 0.474 \\
            &     &      & (0.064) & (0.094) & (0.072) & (0.072) & (0.047) & (0.064) & (0.050) & (0.040) & (0.058) & (0.084) & (0.066) & (0.041) \\
\multicolumn{15}{c}{$G(x)=\log(1 + x)$} \\ 
Case-cohort & 0.1 & 2000 & 0.386 & 0.661 & 0.564 & {\bf 0.359} & {\bf 0.371} & 0.463 & 0.432 & 0.413 & {\bf 0.449} & 0.629 & 0.580 & 0.566 \\
study       &     &      & (0.103) & (0.220) & (0.175) & (0.117) & (0.059) & (0.092) & (0.076) & (0.073) & (0.079) & (0.151) & (0.128) & (0.091) \\
            &     & 3000 & 0.318 & 0.534 & 0.461 & {\bf 0.287} & {\bf 0.344} & 0.510 & 0.462 & 0.384 & {\bf 0.401} & 0.569 & 0.510 & 0.538 \\
            &     &      & (0.080) & (0.164) & (0.129) & (0.097) & (0.062) & (0.113) & (0.101) & (0.054) & (0.063) & (0.122) & (0.104) & (0.071) \\
            
            & 0.2 & 2000 & 0.320 & 0.486 & 0.389 & {\bf 0.298} & {\bf 0.338} & 0.461 & 0.382 & 0.380 & {\bf 0.405} & 0.530 & 0.451 & 0.509 \\
            &     &      & (0.080) & (0.138) & (0.107) & (0.096) & (0.059) & (0.093) & (0.072) & (0.058) & (0.068) & (0.110) & (0.080) & (0.066) \\
            &     & 3000 & 0.260 & 0.399 & 0.320 & {\bf 0.238} & {\bf 0.307} & 0.421 & 0.351 & 0.356 & {\bf 0.366} & 0.473 & 0.401 & 0.488 \\
            &     &      & (0.063) & (0.113) & (0.085) & (0.076) & (0.049) & (0.083) & (0.061) & (0.043) & (0.051) & (0.083) & (0.061) & (0.050) \\
            
Generalized & 0.2 & 2000 & 0.358 & 0.492 & 0.423 & {\bf 0.338} & {\bf 0.366} & 0.447 & 0.404 & 0.396 & {\bf 0.436} & 0.534 & 0.485 & 0.517 \\
case-cohort &     &      & (0.095) & (0.147) & (0.124) & (0.110) & (0.066) & (0.097) & (0.076) & (0.064) & (0.076) & (0.100) & (0.088) & (0.068) \\
study       &     & 3000 & 0.304 & 0.413 & 0.357 & {\bf 0.272} & {\bf 0.332} & 0.379 & 0.351 & 0.364 & {\bf 0.386} & 0.473 & 0.433 & 0.496 \\
            &     &      & (0.076) & (0.116) & (0.097) & (0.086) & (0.045) & (0.058) & (0.050) & (0.047) & (0.057) & (0.079) & (0.068) & (0.051) \\
            
            & 0.3 & 2000 & 0.346 & 0.454 & 0.373 & {\bf 0.303} & {\bf 0.349} & 0.436 & 0.368 & 0.376 & {\bf 0.417} & 0.496 & 0.436 & 0.500 \\
            &     &      & (0.083) & (0.125) & (0.094) & (0.094) & (0.061) & (0.087) & (0.068) & (0.056) & (0.066) & (0.093) & (0.073) & (0.060) \\
            &     & 3000 & 0.287 & 0.365 & 0.312 & {\bf 0.248} & {\bf 0.307} & 0.376 & 0.322 & 0.354 & {\bf 0.374} & 0.448 & 0.389 & 0.479 \\
            &     &      & (0.067) & (0.098) & (0.079) & (0.078) & (0.048) & (0.066) & (0.050) & (0.041) & (0.053) & (0.075) & (0.056) & (0.041) \\
\bottomrule
\end{tabular}
}
\end{table}

Table~\ref{table1} presents the performance comparison of four methods in estimating the covariate-dependent function under different simulation settings. For each replication, the relative error (RE) was computed on the test set, and the table reports the mean and standard deviation of RE across $1000$ replications. When the covariate-dependent function follows Case~1, the perfectly specified LTM method achieves the best performance, while the proposed method is only slightly inferior. Under Case~2 and the more complex Case~3, the proposed method clearly outperforms the others, achieving the smallest relative error. It is noteworthy that across all simulation settings, the proposed method consistently exhibits smaller RE than both the subcohort-based method and the method based on a simple random sample of the same size as the (generalized) case-cohort sample. The advantage is particularly pronounced when the event rate is $p_e=0.1$. Moreover, as the cohort size increases from $1800$ to $2700$ (corresponding to the sample size $n$ increasing from $2000$ to $3000$), the RE of the proposed estimator decreases. This phenomenon is theoretically supported by Theorems~\ref{th1} and~\ref{th2}.

\begin{figure}[htbp]
    \centering

    \begin{subfigure}{0.48\linewidth}
        \centering
        \includegraphics[width=\linewidth]{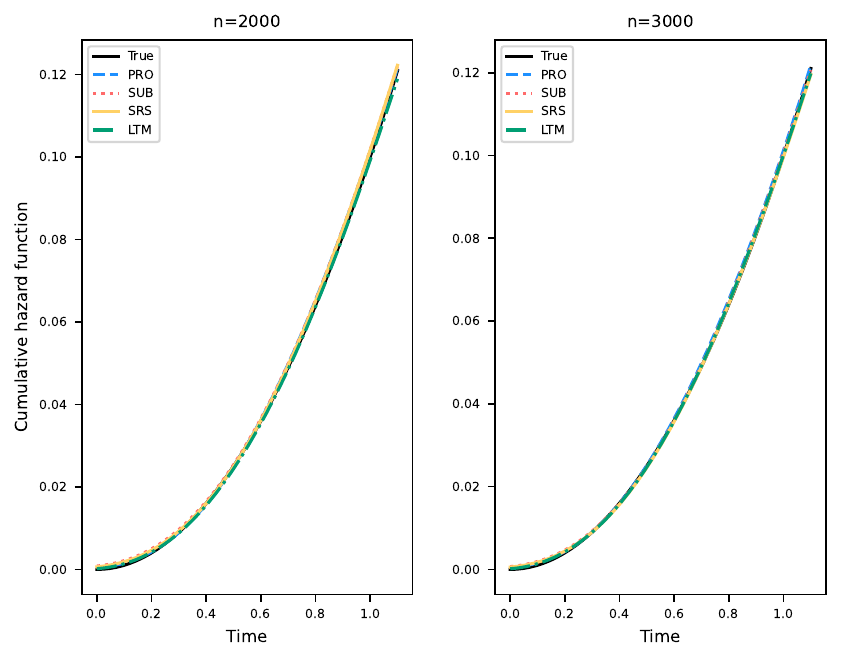}
        \caption{$p_e=0.1$,Case 1}
        \label{fig:sub_a}
    \end{subfigure}
    \hfill
    \begin{subfigure}{0.48\linewidth}
        \centering
        \includegraphics[width=\linewidth]{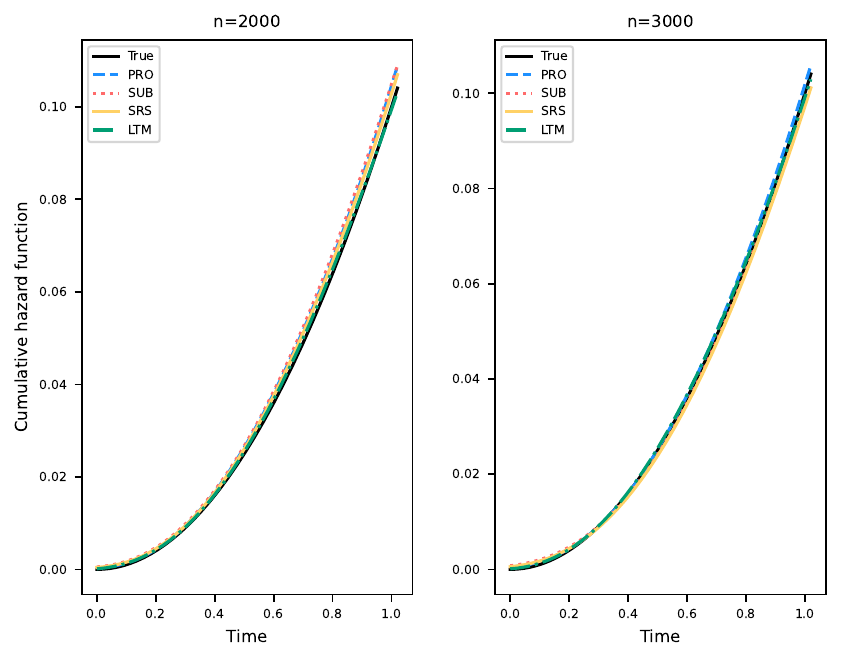}
        \caption{$p_e=0.1$,Case 2}
        \label{fig:sub_b}
    \end{subfigure}
    
    \begin{subfigure}{0.48\linewidth}
        \centering
        \includegraphics[width=\linewidth]{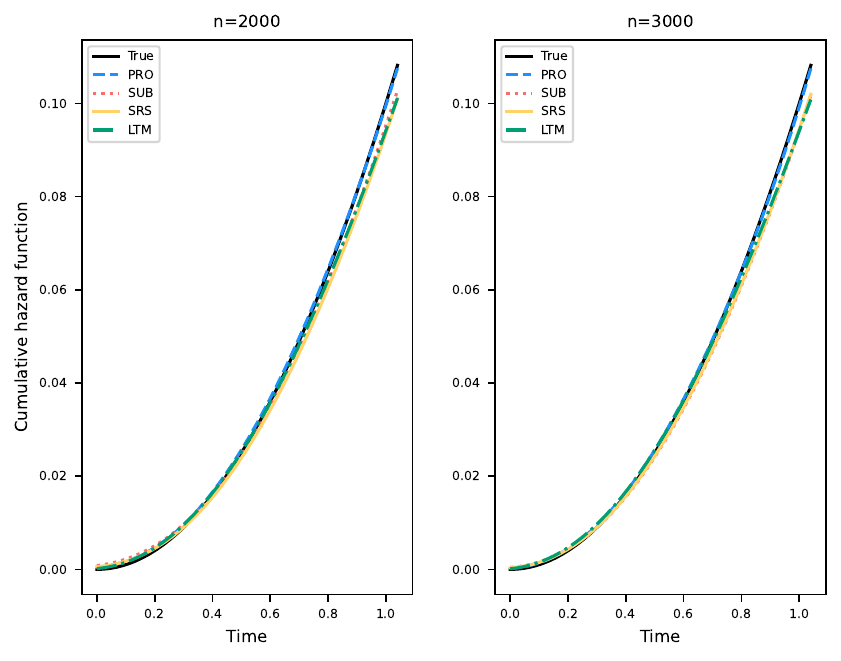}
        \caption{$p_e=0.1$,Case 3}
        \label{fig:sub_c}
    \end{subfigure}

    \caption{Estimates of $\Lambda(\cdot)$ by different methods based on 1000 replications}
    \label{figure1}
\end{figure}

In the simulation, the expectation of the true covariate-dependent function $g$ was set to $0$, and the solution of maximizing the weighted log-likelihood is only unique up to a constant. Therefore, $\hat{g}$ was centered centered to have zero mean by subtracting its sample mean $\bar{\hat{g}}$ computed on the test set. Accordingly, to preserve model equivalence, the estimate of the cumulative baseline hazard function was adjusted by multiplying it by $\exp(\bar{\hat{g}})$. Figure~\ref{figure1} presents a comparison of the estimated cumulative baseline hazard functions, obtained by the four methods, against the true function for the simulation setting with $G(x) = \log(1 + x)$ and an event rate of $p_e = 0.1$. As shown in the figure, when the covariate-dependent function corresponds to Case 1 or Case 2, all four methods provide accurate estimates of the cumulative baseline hazard function. However, under the more complex Case 3, our method exhibits the smallest bias and is able to converge to the true cumulative baseline hazard function.

\begin{table}[htbp]
\centering
\caption{\label{table2}Mean squared prediction errors (×1000) and standard deviations for different methods}
\resizebox{6.6in}{!}{
\begin{tabular}{ccccccccccccccc}
\toprule
\multirow{2}{*}{} & \multirow{2}{*}{$p_{\text{e}}$} & \multirow{2}{*}{$n$} 
& \multicolumn{4}{c}{Case 1} 
& \multicolumn{4}{c}{Case 2} 
& \multicolumn{4}{c}{Case 3} \\
\cmidrule(lr){4-7} \cmidrule(lr){8-11} \cmidrule(lr){12-15}
 &  &  & PRO & SUB & SRS & LTM 
    & PRO & SUB & SRS & LTM 
    & PRO & SUB & SRS & LTM \\
\midrule
\multicolumn{15}{c}{$G(x)=x$} \\ 
Case-cohort & 0.1 & 2000 & {\bf 0.198} & 0.503 & 0.379 & 0.199 & {\bf 0.725} & 1.241 & 1.002 & 0.840 & {\bf 1.569} & 2.654 & 2.000 & 1.821 \\
study       &     &      & (0.120) & (0.263) & (0.213) & (0.182) & (0.369) & (0.615) & (0.464) & (0.424) & (0.744) & (1.412) & (1.027) & (0.832) \\
            &     & 3000 & 0.149 & 0.355 & 0.285 & {\bf 0.118} & {\bf 0.553} & 0.917 & 0.760 & 0.688 & {\bf 1.265} & 1.993 & 1.723 & 1.666 \\
            &     &      & (0.100) & (0.193) & (0.166) & (0.085) & (0.266) & (0.417) & (0.336) & (0.285) & (0.522) & (0.968) & (0.785) & (0.636) \\
            
            & 0.2 & 2000 & 0.414 & 0.916 & 0.562 & {\bf 0.405} & {\bf 1.271} & 2.087 & 1.400 & 1.604 & {\bf 2.070} & 3.062 & 2.355 & 2.925 \\
            &     &      & (0.220) & (0.480) & (0.299) & (0.301) & (0.499) & (0.865) & (0.517) & (0.500) & (0.755) & (1.200) & (0.881) & (0.937) \\
            &     & 3000 & 0.333 & 0.700 & 0.426 & {\bf 0.259} & {\bf 1.059} & 1.609 & 1.218 & 1.419 & {\bf 1.531} & 2.506 & 1.611 & 2.700 \\
            &     &      & (0.174) & (0.346) & (0.192) & (0.188) & (0.315) & (0.531) & (0.371) & (0.362) & (0.509) & (0.844) & (0.541) & (0.714) \\
            
Generalized & 0.2 & 2000 & 0.503 & 0.914 & 0.700 & {\bf 0.481} & {\bf 1.405} & 2.076 & 1.641 & 1.721 & {\bf 2.303} & 2.986 & 2.552 & 3.054 \\
case-cohort &     &      & (0.274) & (0.460) & (0.380) & (0.340) & (0.553) & (0.922) & (0.646) & (0.572) & (0.863) & (1.223) & (1.062) & (0.995) \\
study       &     & 3000 & 0.371 & 0.673 & 0.507 & {\bf 0.322} & {\bf 1.109} & 1.557 & 1.247 & 1.480 & {\bf 1.748} & 2.422 & 2.002 & 2.771 \\
            &     &      & (0.184) & (0.342) & (0.270) & (0.218) & (0.378) & (0.571) & (0.412) & (0.416) & (0.583) & (0.841) & (0.689) & (0.756) \\
            
            & 0.3 & 2000 & 0.804 & 1.299 & 0.880 & {\bf 0.680} & {\bf 1.989} & 2.852 & 2.039 & 2.521 & {\bf 2.527} & 3.497 & 2.643 & 3.917 \\
            &     &      & (0.410) & (0.671) & (0.433) & (0.439) & (0.786) & (1.083) & (0.693) & (0.777) & (0.848) & (1.289) & (0.892) & (1.019) \\
            &     & 3000 & 0.600 & 0.932 & 0.610 & {\bf 0.460} & {\bf 1.525} & 2.173 & 1.620 & 2.242 & {\bf 2.182} & 3.020 & 2.245 & 3.644 \\
            &     &      & (0.285) & (0.437) & (0.276) & (0.276) & (0.467) & (0.709) & (0.464) & (0.508) & (0.684) & (0.993) & (0.704) & (0.787) \\
\multicolumn{15}{c}{$G(x)=2\log(1 + x/2)$} \\ 
Case-cohort & 0.1 & 2000 & {\bf 0.183} & 0.478 & 0.354 & 0.183 & 0.747 & 1.282 & 1.016 & {\bf 0.745} & {\bf 1.237} & 2.031 & 1.725 & 1.566 \\
study       &     &      & (0.112) & (0.260) & (0.189) & (0.146) & (0.426) & (0.632) & (0.456) & (0.362) & (0.617) & (1.093) & (0.932) & (0.751) \\
            &     & 3000 & 0.130 & 0.363 & 0.283 & {\bf 0.121} & {\bf 0.496} & 0.880 & 0.717 & 0.603 & {\bf 0.938} & 1.623 & 1.336 & 1.474 \\
            &     &      & (0.068) & (0.177) & (0.143) & (0.092) & (0.205) & (0.406) & (0.307) & (0.217) & (0.412) & (0.766) & (0.618) & (0.619) \\

            & 0.2 & 2000 & 0.418 & 0.942 & 0.584 & {\bf 0.386} & {\bf 1.183} & 1.980 & 1.416 & 1.442 & {\bf 1.831} & 2.966 & 2.080 & 2.652 \\
            &     &      & (0.218) & (0.512) & (0.305) & (0.268) & (0.442) & (0.821) & (0.538) & (0.481) & (0.713) & (1.268) & (0.796) & (0.883) \\
            &     & 3000 & 0.291 & 0.644 & 0.405 & {\bf 0.253} & {\bf 0.992} & 1.586 & 1.104 & 1.271 & {\bf 1.457} & 2.357 & 1.677 & 2.439 \\
            &     &      & (0.150) & (0.339) & (0.210) & (0.165) & (0.339) & (0.564) & (0.354) & (0.335) & (0.499) & (0.830) & (0.565) & (0.698) \\

Generalized & 0.2 & 2000 & 0.523 & 0.936 & 0.683 & {\bf 0.497} & {\bf 1.404} & 1.879 & 1.590 & 1.556 & {\bf 2.010} & 2.934 & 2.374 & 2.782 \\
case-cohort &     &      & (0.298) & (0.508) & (0.359) & (0.326) & (0.499) & (0.751) & (0.634) & (0.548) & (0.789) & (1.227) & (0.988) & (0.975) \\
study       &     & 3000 & 0.356 & 0.643 & 0.499 & {\bf 0.322} & {\bf 1.036} & 1.472 & 1.205 & 1.360 & {\bf 1.655} & 2.409 & 1.955 & 2.511 \\
            &     &      & (0.178) & (0.326) & (0.271) & (0.202) & (0.360) & (0.537) & (0.411) & (0.384) & (0.587) & (0.901) & (0.699) & (0.695) \\

            & 0.3 & 2000 & 0.827 & 1.384 & 0.891 & {\bf 0.730} & {\bf 1.788} & 2.580 & 1.878 & 2.279 & 2.871 & 4.085 & {\bf 2.866} & 3.577 \\
            &     &      & (0.424) & (0.710) & (0.453) & (0.464) & (0.662) & (1.018) & (0.674) & (0.691) & (1.062) & (1.499) & (0.972) & (1.019) \\
            &     & 3000 & 0.628 & 1.004 & 0.679 & {\bf 0.459} & {\bf 1.461} & 2.088 & 1.548 & 2.009 & {\bf 2.103} & 3.075 & 2.226 & 3.263 \\
            &     &      & (0.300) & (0.469) & (0.311) & (0.282) & (0.462) & (0.670) & (0.463) & (0.448) & (0.676) & (0.954) & (0.673) & (0.770) \\
\multicolumn{15}{c}{$G(x)=\log(1 + x)$} \\ 
Case-cohort & 0.1 & 2000 & {\bf 0.177} & 0.495 & 0.359 & 0.177 & {\bf 0.583} & 0.993 & 0.814 & 0.685 & {\bf 1.148} & 1.946 & 1.626 & 1.463 \\
study       &     &      & (0.101) & (0.268) & (0.192) & (0.124) & (0.274) & (0.550) & (0.385) & (0.333) & (0.558) & (1.054) & (0.890) & (0.705) \\
            &     & 3000 & 0.125 & 0.332 & 0.246 & {\bf 0.119} & {\bf 0.509} & 0.937 & 0.756 & 0.551 & {\bf 0.921} & 1.537 & 1.246 & 1.328 \\
            &     &      & (0.068) & (0.173) & (0.124) & (0.087) & (0.237) & (0.410) & (0.316) & (0.195) & (0.403) & (0.736) & (0.558) & (0.533) \\
            
            & 0.2 & 2000 & 0.413 & 0.931 & 0.580 & {\bf 0.387} & {\bf 1.135} & 1.969 & 1.317 & 1.343 & {\bf 1.704} & 2.793 & 1.994 & 2.364 \\
            &     &      & (0.217) & (0.494) & (0.301) & (0.256) & (0.451) & (0.806) & (0.497) & (0.453) & (0.695) & (1.232) & (0.785) & (0.854) \\
            &     & 3000 & 0.280 & 0.666 & 0.417 & {\bf 0.245} & {\bf 0.917} & 1.579 & 1.100 & 1.165 & {\bf 1.376} & 2.232 & 1.608 & 2.237 \\
            &     &      & (0.140) & (0.349) & (0.219) & (0.158) & (0.325) & (0.607) & (0.373) & (0.326) & (0.460) & (0.827) & (0.535) & (0.656) \\
            
Generalized & 0.2 & 2000 & 0.508 & 0.933 & 0.689 & {\bf 0.485} & {\bf 1.254} & 1.840 & 1.501 & 1.462 & {\bf 1.904} & 2.807 & 2.299 & 2.510 \\
case-cohort &     &      & (0.271) & (0.498) & (0.352) & (0.311) & (0.474) & (0.814) & (0.595) & (0.491) & (0.754) & (1.155) & (0.937) & (0.886) \\
study       &     & 3000 & 0.359 & 0.653 & 0.485 & {\bf 0.320} & {\bf 1.032} & 1.378 & 1.170 & 1.245 & {\bf 1.503} & 2.218 & 1.817 & 2.268 \\
            &     &      & (0.178) & (0.328) & (0.249) & (0.193) & (0.330) & (0.460) & (0.365) & (0.362) & (0.532) & (0.852) & (0.629) & (0.674) \\
            
            & 0.3 & 2000 & 0.830 & 1.420 & 0.952 & {\bf 0.686} & {\bf 1.822} & 2.755 & 1.956 & 2.120 & {\bf 2.381} & 3.381 & 2.572 & 3.284 \\
            &     &      & (0.405) & (0.705) & (0.469) & (0.417) & (0.675) & (1.041) & (0.704) & (0.661) & (0.845) & (1.247) & (0.892) & (0.904) \\
            &     & 3000 & 0.573 & 0.942 & 0.655 & {\bf 0.468} & {\bf 1.368} & 2.037 & 1.476 & 1.838 & {\bf 1.908} & 2.684 & 2.038 & 2.965 \\
            &     &      & (0.276) & (0.470) & (0.320) & (0.284) & (0.433) & (0.696) & (0.456) & (0.453) & (0.600) & (0.892) & (0.642) & (0.681) \\
\bottomrule
\end{tabular}
}
\end{table}

Table~\ref{table2} compares the mean squared prediction error (MSPE) between the estimated survival functions and the true survival function across the four methods. The results indicate that when the covariate-dependent function corresponds to Case 1, the LTM method achieves the best performance due to its correct model specification, while the proposed deep learning approach based on the (generalized) case-cohort sample delivers the second-best performance, close to that of the LTM method. When the covariate-dependent function is more complex, the proposed method attains the smallest MSPE in the majority of settings, outperforming the deep learning approaches based on subcohort or simple random samples. Moreover, as the cohort size increases, the MSPE of the proposed method decreases, as expected.

These simulation results indicate that the LTM method relies on strong model assumptions and is less suitable for complex real-world data. In contrast, DNN-based approaches offer greater flexibility and provide a more robust option for modeling covariate effects. Notably, our method demonstrates higher stability and smaller prediction errors than SUB and SRS, particularly in settings with low event rates.

\section{An Application}
\label{applic}

In this section, the proposed method is applied to a set of interval-censored data on child mortality obtained from the 2003 Nigeria Demographic and Health Survey \citep{Kneib2006Mixed}. In this study, if a child died within the first two months after birth, the exact time of death could be recorded. For deaths occurring after this period, information was collected through interviews with the mothers, which resulted in interval-censored data on the death time. Six covariates were included in the analysis: the mother’s age at birth (AGE) and body mass index (BMI), both continuous covariates and standardized in subsequent analyses; and four binary covariates: whether the child was born in a hospital (HOSP, 1 for hospital birth and 0 otherwise), the child’s gender (GENDER, 1 for male and 0 for female), whether the mother had received higher education (EDU, 1 if yes, 0 otherwise), and whether the household resided in an urban area (URBAN, 1 for urban, 0 otherwise).

Our analysis included $5730$ children, among whom $663$ had interval-censored observations for the death time, while the remaining observations were right-censored. The event rate was $0.1157$. To assess model performance, ten-fold cross-validation was employed. In each round, one fold was retained as the test set, and the remaining nine folds were used as the training-validation set, which was further randomly divided into the training and validation set in a $9{:}1$ ratio. The training set corresponded to the entire study cohort, and a case-cohort sample was artificially constructed by selecting a subcohort from the cohort using Bernoulli sampling with probability $0.2$. We considered the model~\eqref{equa1}, which was fitted using the proposed method based on the case-cohort sample. In addition to the proposed method, the three approaches introduced in Section~\ref{simul} were also used for comparison.

Hyperparameter tuning was performed through grid search, using the negative log-likelihood on the validation set as the evaluation criterion, and early stopping was employed to prevent overfitting based on the validation performance. Compared with the hyperparameters considered in the simulation study, the degree $m$ of the Bernstein polynomials and the parameter $r$ of the logarithmic transformation function $G$ were additionally included in the real-data analysis. For the linear transformation model based on the case-cohort sample, both $m$ and $r$ were specified as hyperparameters. The interval $[c,u]$ in the Bernstein polynomials was set to the minimum and the maximum finite values of the observed ${L_i,R_i}$. The hyperparameter search space was defined as follows:
\begin{itemize}
    \setlength{\itemsep}{0pt}
    \setlength{\parskip}{0pt}
    \item Degree $m$ of the Bernstein polynomials: $6$, $7$, $8$;
    \item Parameter $r$ of the logarithmic transformation function $G$: $0$, $0.5$, $1$;
    \item Batch size: $32$, $64$;
    \item Number of hidden layers $H$: $1$, $2$, $3$;
    \item Number of neurons per hidden layer $p_h$: $50$, $100$, $200$, $300$;
    \item Dropout rate: $0$, $0.1$, $0.3$;
    \item Learning rate for BPNet: $0.01$, $0.005$;
    \item Learning rate for the covariate network: $1 \times 10^{-4}$, $5 \times 10^{-5}$.
\end{itemize}

In the analysis of the real dataset, the predictive accuracy of each method was evaluated using the integrated Brier score (IBS).
\begin{equation*}
\text{IBS}(\widehat{S}) = \frac{1}{n_1}\sum_{i=1}^{n_1} \frac{1}{u-c} \int_c^u \left\{ I(T_i > t \mid Z_i) - \widehat{S}(t \mid Z_i) \right\}^2 dt,
\end{equation*}
where $n_1$ denotes the size of the test set and $[c,u]$ is taken to be the same as in the Bernstein polynomials. For an individual $i$, $I(T_i > t \mid Z_i) = 0$ if $R_i < t$, and $I(T_i > t \mid Z_i) = 1$ if $L_i \geq t$. When $L_i < t \leq R_i$, the true value of $I(T_i > t \mid Z_i)$ is unknown and is estimated by $\hat{I}(T_i > t \mid Z_i) = \frac{\hat{S}(t \mid Z_i) - \hat{S}(R_i \mid Z_i)}{\hat{S}(L_i \mid Z_i) - \hat{S}(R_i \mid Z_i)}$. In the special case where $L_i < t \leq R_i = \infty$, it is estimated as $\hat{I}(T_i > t \mid Z_i) = \frac{\hat{S}(t \mid Z_i)}{\hat{S}(L_i \mid Z_i)}$.

\begin{table}[htbp]
\centering
\caption{\label{table3}Integrated Brier scores (×100) and standard deviations for different methods}
\begin{tabular}{ccccc}
\toprule
\textbf{Model} & \textbf{PRO} & \textbf{SUB} & \textbf{SRS} & \textbf{LTM}\\
\midrule
IBS & {\bf 8.249} & 8.307 & 8.301 & 8.262\\
SD & (0.266) & (0.307) & (0.271) & (0.237) \\
 \bottomrule
\end{tabular}
\end{table}

Table~\ref{table3} presents a comparison of the predictive performance of four methods. The reported IBS and SD correspond to the mean and standard deviation of the IBS, respectively, calculated on the test sets through ten-fold cross-validation, with all values multiplied by $100$ for presentation. A smaller IBS indicates better predictive performance. The results show that our method achieves the best predictive accuracy. In addition, the proposed method outperforms SUB and SRS in both accuracy and stability.

After fitting the model \eqref{equa1} using the case-cohort sample, we applied the SHAP method to provide an interpretable analysis of the covariate neural network’s predictions on the test set. The SHAP method decomposes the model prediction for each sample into additive contributions of individual covariates relative to a baseline prediction, with each contribution being called the SHAP value \citep{Lundberg2017unified}. Let $Z_i$ denote the covariates of the $i$th sample in the test set, and let $\hat{g}(Z_i)$ denote the corresponding prediction of the covariate neural network. The SHAP values satisfy the following additive relationship $\hat{g}(Z_i) = g_{\text{base}} + \sum_j \phi_{i,j},$ where $g_{\text{base}}$ represents the expected prediction of the covariate neural network over the background data, which was obtained by randomly sampling from the case-cohort sample according to the original case proportion. Here, $\phi_{i,j}$ is the SHAP value of the $j$th covariate for the $i$th sample, quantifying the specific contribution of this covariate to the prediction relative to the baseline $g_{\text{base}}$.

\begin{figure}[htbp]
    \centering

    \begin{subfigure}{0.48\linewidth}
        \centering
        \includegraphics[width=\linewidth]{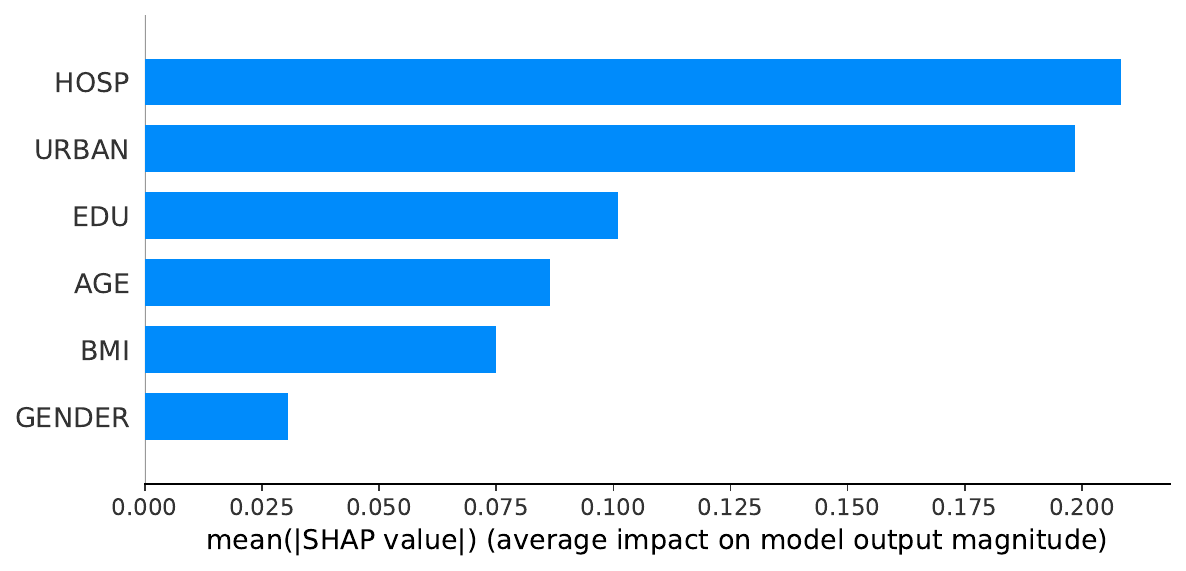}
        \caption{Mean absolute SHAP values for each covariate}
        \label{fig2:shap_bar}
    \end{subfigure}
    \hfill
    \begin{subfigure}{0.48\linewidth}
        \centering
        \includegraphics[width=\linewidth]{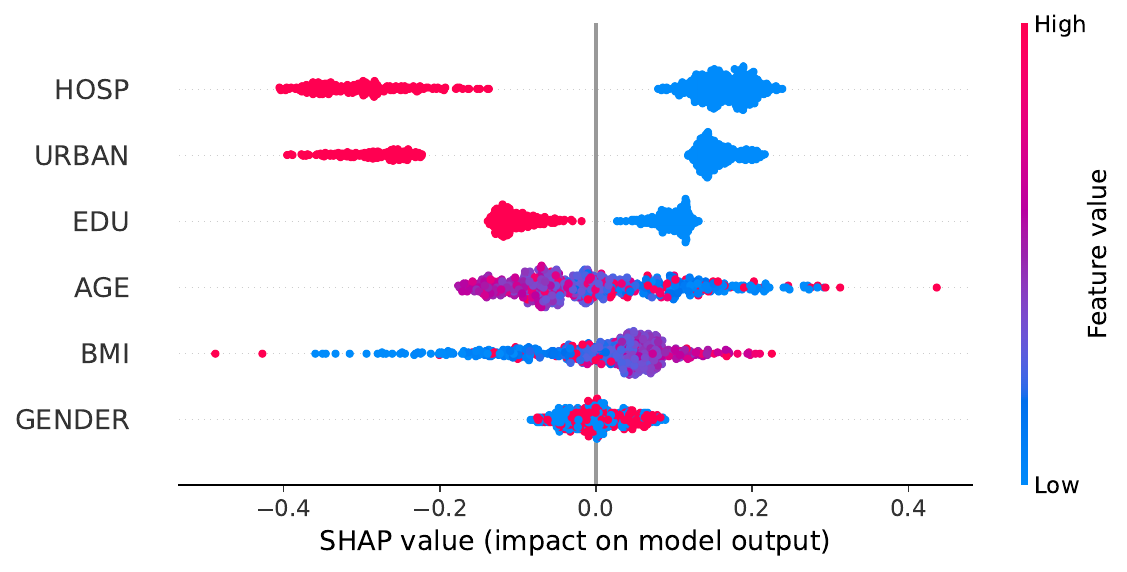}
        \caption{Distribution of SHAP values for each covariate}
        \label{fig2:shap_dot}
    \end{subfigure}

    \caption{SHAP analysis of the neural network used to estimate the covariate-dependent function}
    \label{fig2:shap_bar_dot}
\end{figure}

Figure~\ref{fig2:shap_bar_dot} presents the SHAP analysis of the neural network used to approximate the covariate-dependent function in the model, based on one fold of ten-fold cross-validation. Figure~\ref{fig2:shap_bar} shows a bar plot of the mean absolute SHAP values for each covariate, reflecting their relative importance in predicting the network output $\hat{g}(Z)$. In this fold, HOSP has the largest impact on model predictions, followed by URBAN, then EDU, AGE, and BMI, while GENDER has a relatively minor effect. Figure~\ref{fig2:shap_dot} shows the distribution of SHAP values, illustrating the direction and magnitude of each covariate’s effect on the predictions. The color of each point indicates the covariate value (red for higher values, blue for lower values), and the x-axis represents the SHAP values for the covariates for the corresponding samples. It can be observed that when HOSP, URBAN, and EDU take the value $1$, they correspond to negative SHAP values, indicating that hospital delivery, urban residence, and higher education tend to decrease the neural network output $\hat{g}(Z)$ relative to the baseline $g_{\text{base}}$, corresponding to lower cumulative hazard function and higher survival probability at fixed times. The effects of AGE, BMI, and GENDER are more balanced, showing no obvious result can be seen in the figure.

\begin{figure}[htbp]
    \centering

    \begin{subfigure}{0.48\linewidth}
        \centering
        \includegraphics[width=\linewidth]{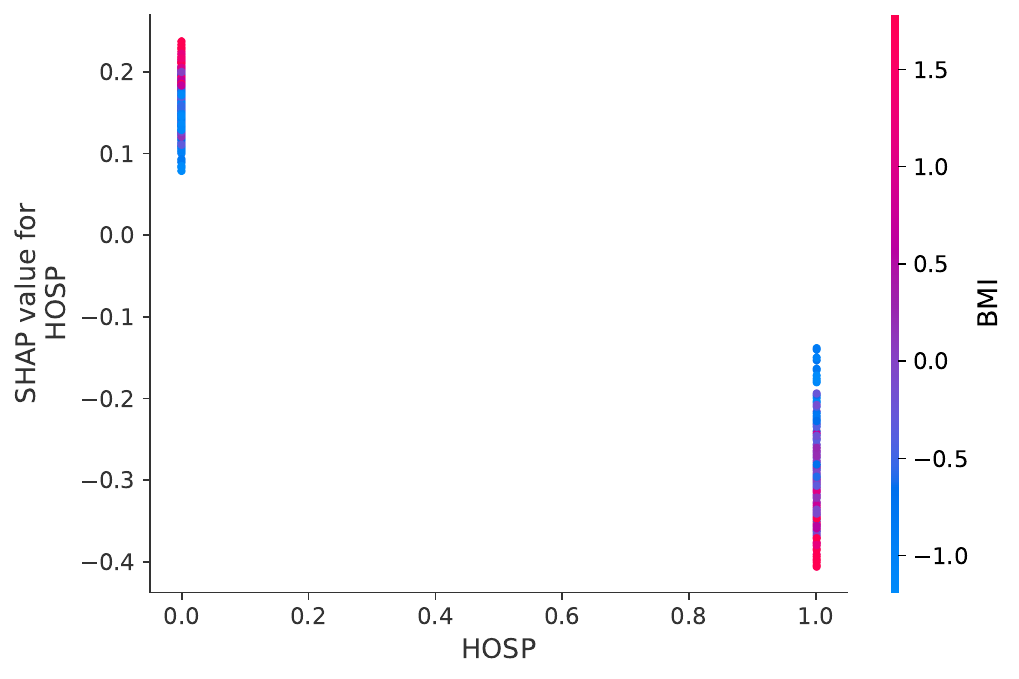}
        \caption{}
        \label{fig3:HOSP}
    \end{subfigure}
    \hfill
    \begin{subfigure}{0.48\linewidth}
        \centering
        \includegraphics[width=\linewidth]{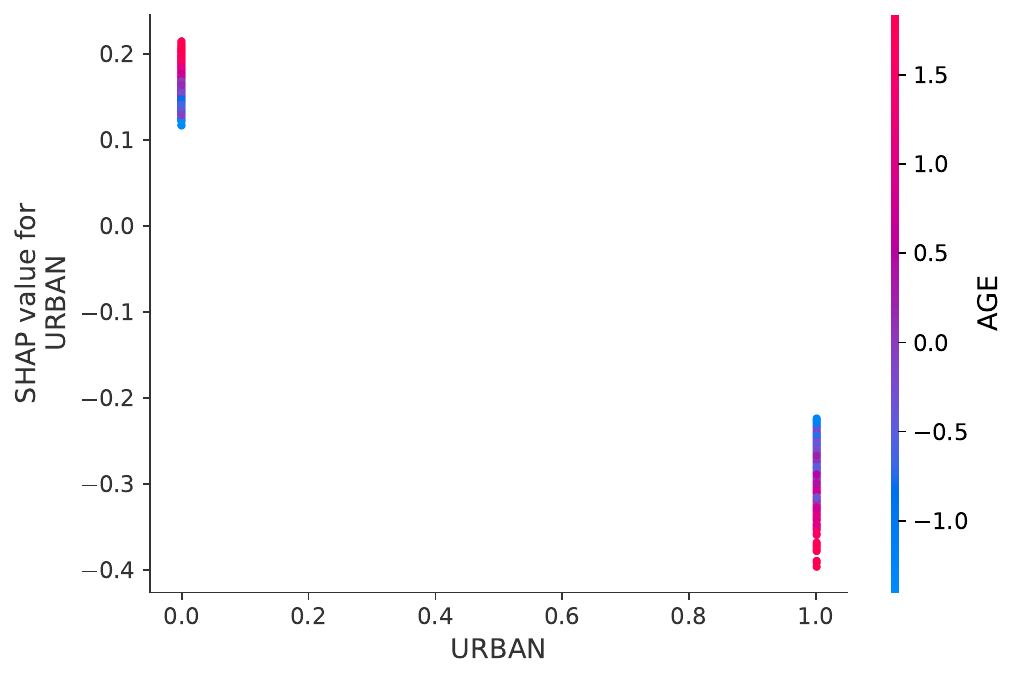}
        \caption{}
        \label{fig3:URBAN}
    \end{subfigure}

    \begin{subfigure}{0.48\linewidth}
        \centering
        \includegraphics[width=\linewidth]{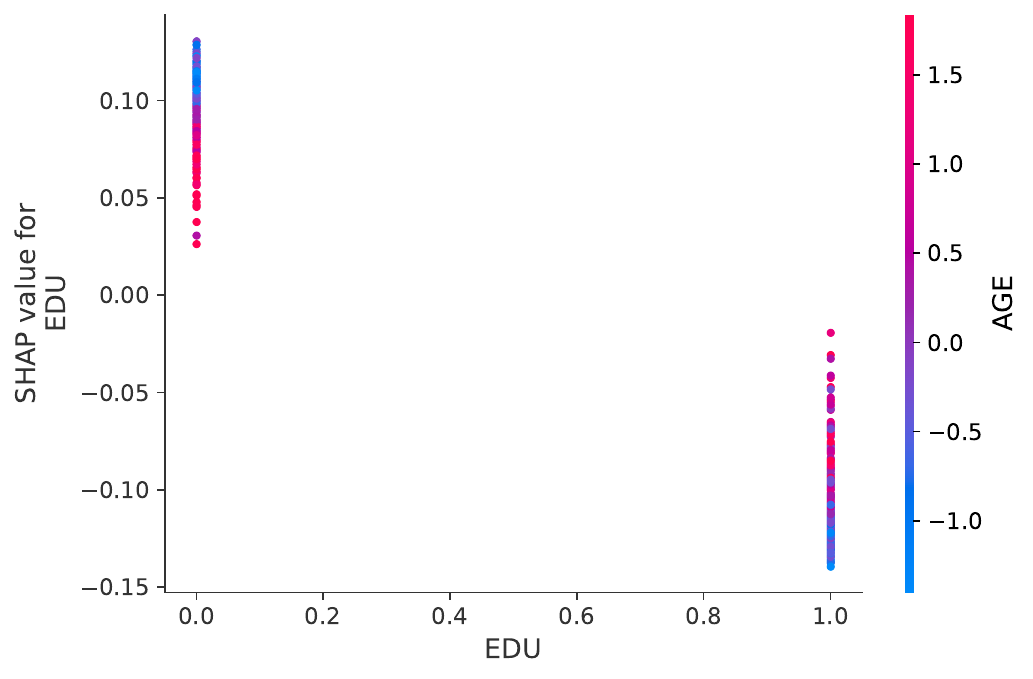}
        \caption{}
        \label{fig3:EDU}
    \end{subfigure}
    \hfill
    \begin{subfigure}{0.48\linewidth}
        \centering
        \includegraphics[width=\linewidth]{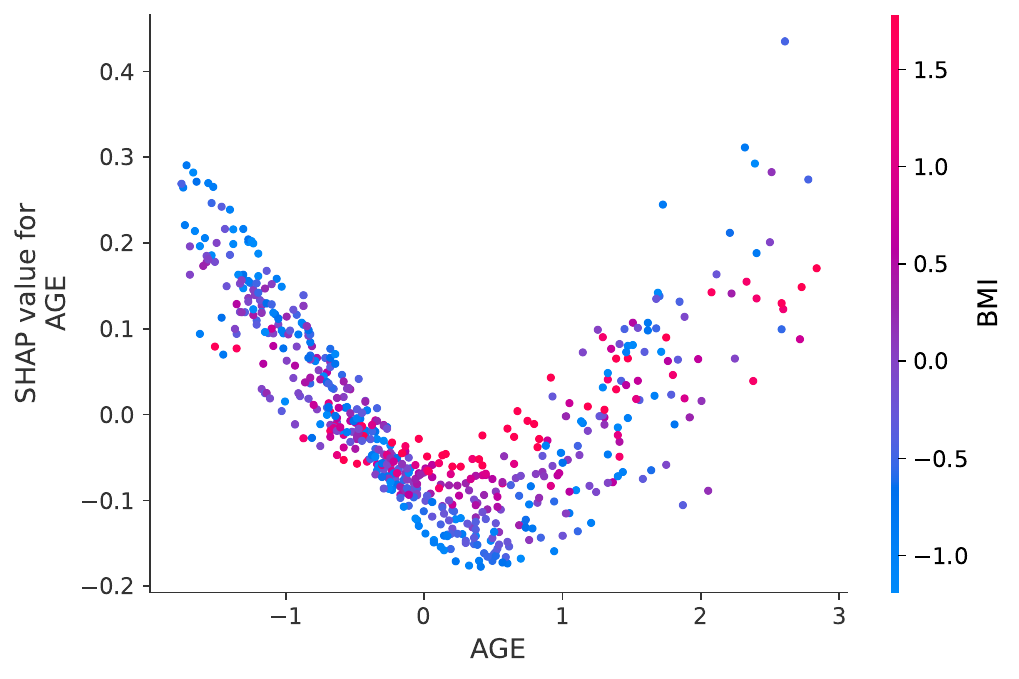}
        \caption{}
        \label{fig3:AGE}
    \end{subfigure}
    
    \caption{SHAP dependence plots for (a) HOSP, (b) URBAN, (c) EDU, and (d) AGE}
    \label{fig3:shap_dependence}
\end{figure}

We selected the top four covariates ranked by mean absolute SHAP values and plotted their SHAP dependence plots, as shown in Figure~\ref{fig3:shap_dependence}. This analysis further examines how covariates influence the model output and reveals potential interactions with other covariates. In each dependence plot, the x-axis represents the values of the covariate under consideration, while the y-axis shows its corresponding SHAP values. The color of the points indicates the values of the covariate that has the strongest interaction with the covariate under analysis, with red denoting larger values and blue denoting smaller values.

Figure~\ref{fig3:HOSP} shows that when HOSP = 1, the SHAP values for HOSP are predominantly negative, indicating that hospital delivery tends to reduce the neural network output, thereby increasing child survival probability at fixed times. Moreover, this effect is more pronounced among mothers with higher BMI (i.e., more negative SHAP values), suggesting that children of mothers with higher BMI benefit more from hospital delivery. When HOSP = 0, the SHAP values for HOSP are positive, but smaller for mothers with lower BMI, indicating that in the absence of hospital care, children of mothers with lower BMI face slightly lower risks than those of mothers with higher BMI.

Based on Figure~\ref{fig3:URBAN}, urban residents (URBAN = 1) have negative SHAP values for the URBAN covariate, indicating that living in an urban area tends to reduce the neural network output, thereby lowering the risk of child mortality. This risk-reducing effect is more pronounced among older mothers. In contrast, when URBAN = 0, the SHAP values for URBAN are generally positive, suggesting that non-urban residence increases the model prediction, implying a higher risk of child mortality. Moreover, under non-urban residence, the SHAP values for URBAN are higher for older mothers.

As shown in Figure~\ref{fig3:EDU}, when EDU = 1, EDU generally has negative SHAP values, indicating that higher maternal education tends to reduce the risk of child mortality. This risk-reducing effect is more pronounced among younger mothers (i.e., more negative SHAP values). In contrast, when EDU = 0, the covariate has positive SHAP values, suggesting that this characteristic tends to increase the model output, thereby increasing risk. Within this group, the SHAP values for EDU are higher for younger mothers.

As shown in Figure~\ref{fig3:AGE}, the SHAP values for AGE exhibit a U-shaped pattern, with negative SHAP values occurring only within a certain age range, indicating the presence of an optimal childbearing period. Maternal ages that are too young or too old increase the model output, thereby elevating the risk of child mortality. Notably, within the optimal childbearing range, the SHAP values for AGE are smaller for mothers with higher BMI, suggesting that higher BMI attenuates the risk-reducing effect associated with the optimal age period.

\section{Concluding Remarks}
\label{concl}

This paper considered a class of transformation models with unspecified covariate-dependent function and analyzed the interval-censored data arising from the (generalized) case-cohort designs. The effects of all covariates were modeled through an unknown function, which avoids overly restrictive linear assumptions and facilitates the capture of complex relationships present in real data. The present framework offers flexibility in two aspects. First, the transformation model is general, encompassing commonly used models, such as the proportional hazards model, as special cases. Second, the use of deep neural networks provides a powerful function approximation tool with strong representation learning capabilities, while mitigating the curse of dimensionality. We developed a sieve weighted likelihood estimation method that combined deep learning with Bernstein polynomials, and established the theoretical properties of the proposed estimator. The results of both simulations and real data analyses demonstrated that the proposed approach performs well in practice. Given that DNN-based survival methods often face challenges in interpretability, SHAP was employed in the real data analysis to attribute the predictions from the covariate network to the covariates, yielding several meaningful insights.

There are several potential extensions of the proposed method. First, in practice, some covariates are readily available and can cover the entire cohort. However, in this work, the model is trained only on the case-cohort sample, without fully leveraging information from the entire cohort. \cite{Zhou2024Improving} proposed an update estimation procedure that uses complete cohort information to improve estimation efficiency, but their method relies on a linear covariate assumption. Therefore, developing a deep learning approach that can incorporate entire cohort information to enhance the performance of existing methods represents a promising direction for future research. Second, most existing deep learning methods for survival analysis focus on univariate censored failure time data, and studies on bivariate data remain very limited. Modeling bivariate interval-censored data requires additional consideration of the dependence between the two failure time variables, making the extension of deep learning methods to bivariate interval-censored settings both challenging and valuable.

\bibliographystyle{apalike}
\bibliography{reference}

\section*{Appendix: Proofs of Theorems}
\label{appen}
\addcontentsline{toc}{section}{Appendix: Proofs of Theorems}

\renewcommand{\theequation}{A.\arabic{equation}}
\renewcommand{\theHequation}{A.\arabic{equation}}
\setcounter{equation}{0}

First, we introduce some notation. For any vector $\boldsymbol{v} = (v_1, \dots, v_p)^\top \in \mathbb{R}^p$, $\|\boldsymbol{v}\|_2 = (\sum_{i=1}^p v_i^2)^{1/2}$ and $\|\boldsymbol{v}\|_\infty = \max_i |v_i|$, and for any matrix $W = (w_{ij}) \in \mathbb{R}^{m \times n}$, $\|W\|_\infty = \max_{i,j} |w_{ij}|$. For any function $h$, $\|h\|_\infty$ and $\|h\|_2$ are the sup-norm and $L^2$-norm of $h$, respectively, and for any vector function $\boldsymbol{h} = (h_1, \dots, h_s)^\top$, $\|\boldsymbol{h}\|_\infty = \max_i \|h_i\|_\infty$. Denote $a_n \lesssim b_n$ as $a_n \leq c b_n$ for some $c > 0$ and any $n$. And $a_n \asymp b_n$ means $a_n \lesssim b_n$ and $b_n \lesssim a_n$. We use $C$ to denote a universal positive constant which may differ from place to place.

Let $O^{\varphi } =\left\{ K,U_1,\ldots,U_K,\Delta_{1},\ldots,\Delta_{K},\varphi  Z,\varphi \right\}$ denote a single observation, where $\varphi$ indicates whether the covariate $Z$ is observed, and $\Delta_{k}=I(U_{k-1} < T \leq  U_k)$. Define $L= \max\left\{ U_{k}: U_{k}< T, k=0, \ldots ,K\right\}$ and $R= \min\left\{ U_{k}: U_{k}\geq  T, k=1, \ldots , K+1\right\}$, with $\Delta_{L}=I(L=0)$ and $\Delta_{I}=I(L\neq 0 , R < \infty )$, where $U_0=0$ and $U_{K+1}=\infty$. Then, the observation can equivalently be represented as $O^{\varphi  } =\left\{ L,R,\Delta_{L},\Delta_{I},\varphi  Z,\varphi \right\}$. Let $\theta = (\Lambda , g)$, with the true value $\theta_0 = (\Lambda_0 , g_0)$, and define $G\left(\Lambda(t) \exp({g(Z)})\right)=G_{\theta }(t,Z)$. The weighted log-likelihood function based on a single observation $O^{\varphi}$ is then given by
\begin{equation*}
\begin{split}
l^{w}(\theta ,O^{\varphi }) &=wl(\theta ,O)\\
&=w\big\{\sum_{k=1}^{K+1}\Delta_{k} \log \left[ \exp\left(-G_{\theta }(U_{k-1},Z)\right) - \exp\left(-G_{\theta }(U_k,Z)\right) \right] \big\}\\
&= w \big\{ \Delta_{L} \log \left[ 1 - \exp\left(-G_{\theta }(R,Z)\right) \right] + \Delta_{I} \log \left[ \exp\left(-G_{\theta }(L,Z)\right) \right. \\
&\quad \left. - \exp\left(-G_{\theta }(R,Z)\right) \right] - (1 - \Delta_{L} - \Delta_{I}) G_{\theta }(L,Z) \big\},
\end{split}
\end{equation*}
where $w = \varphi / \left[(1-\Delta_{L}-\Delta_{I})p_s + (\Delta_{L}+\Delta_{I})(p_s+(1-p_s)p_c)\right]$is bounded and does not depend on $\theta$, and $p_s$ and $p_c$ are known constants. Let $O=\left\{ L,R,\Delta_{L},\Delta_{I},Z\right\}$ denote the complete data. For the purpose of the proof, let $\mathbb{P}_n$ denote the empirical measure based on $n$ independent observations, and $\mathbb{P}$ denote the true probability measure. Define $F_n(\theta )=\mathbb{P}_n l^{w}(\theta ,O^{\varphi })$ and $F(\theta )=\mathbb{P} l^{w}(\theta ,O^{\varphi })$.

Before presenting the proof, we first describe the required regularity conditions:
\begin{enumerate}
    \item[(C1)] \label{cond:C1} $H = O(\log n), s = O(n \gamma_n^2 \log n)$ and  $n \gamma_n^2 \lesssim \min(p_h)_{h=1, ..., H} \leq \max(p_h)_{h=1, ..., H} \lesssim n$.

    \item[(C2)] The nonparametric function $g_0$ is an element of $\mathcal{H}_0 = \{ g \in \mathcal{H}(L, \boldsymbol{\alpha}, \boldsymbol{d}, \boldsymbol{\tilde{d}}, B) : \mathbb{E}\{g(Z)\} = 0 \}$.

    \item[(C3)] The covariate $Z$ takes value in a bounded subset of $\mathbb{R}^{p}$ with probability density function bounded away from zero. Without loss of generality, we assume that the domain of $Z$ is taken to be $[0, 1]^p$. 

    \item[(C4)] The number of examination times $K$ is positive with $\mathbb{E}(K) < \infty$. There exists $\eta > 0$ such that $Pr(\min_{0 \leq k \leq K} (U_{k+1} - U_{k}) \geq \eta \mid K, Z) = 1$. The union of the supports of $\{U_k : k = 1, \ldots, K\}$ is contained in the interval $[c, u]$, where $0 < c < u < \infty$.

    \item[(C5)] (i) The function $\Lambda_{0}\in \mathcal{M}_0$ is continuously differentiable up to order $r$ in $[c, u]$ and satisfies $\xi^{-1} < \Lambda_{0}(c) < \Lambda_{0}(u) < \xi$ for some positive constant $\xi$. 
    (ii)The transformation $G$ is a strictly increasing function with $G(0) = 0$ and is three-times continuously differentiable in $[0, u]$.

    \item[(C6)] For every $\theta$ in a neighborhood of $\theta_0$, 
    $\mathbb{P}\{l^{w}(\theta ,O^{\varphi }) - l^{w}(\theta_0 ,O^{\varphi })\} \lesssim - d^2(\theta, \theta_0)$.
\end{enumerate}
Condition (C1) determines the structure of the neural network family $\mathcal{G}(H, s, \boldsymbol{p}, D)$ in \eqref{equa4}. Condition (C2) ensures the identifiability of the investigated model. Condition (C3)-(C6) are commonly used in the studies of interval-censored data. 

\begin{proof}[Proof of Theorem~\ref{th1}]
We first consider the estimator $\hat{\theta}^* = (\hat{\Lambda}^*, \hat{g}^*)$ in \eqref{equa6} that satisfies $\allowbreak \mathbb{E}\left\{ \hat{g}^*(Z)\right\} = \mathbb{E}\left\{ g_0(Z)\right\}$. In fact, for any estimator $\hat{\theta} = (\hat{\Lambda}, \hat{g})$, its transformation $\hat{\theta}^* = (\hat{\Lambda}\exp(\mathbb{E}\left\{ \hat{g}(Z)\right\}), \hat{g}-\mathbb{E}\left\{ \hat{g}(Z)\right\})$ is also an estimator in \eqref{equa6}.

We now prove $d(\hat{\theta}^*, \theta_0) \overset{p}{\to} 0$ as $n \to \infty$. For some $D > 0$, let $\mathcal{G}_D := \mathcal{G}(H, s, \boldsymbol{p}, D)$ and $\mathcal{M}_D = \left\{ \Lambda_n(t) = \sum_{k=0}^m \phi_k B_k(t, m, c, u) : \sum_{k=0}^m |\phi_k| \leq D , 0 \leq \phi_0 \leq \cdots \leq \phi_m \right\}$. Define
\begin{equation}
\label{equaA1}
\hat{\theta}_D^* = (\hat{\Lambda}_D^*, \hat{g}_D^*) = \operatorname*{arg\,max}_{\substack{\theta \in \mathcal{M}_D \times \mathcal{G}_D,\\ \mathbb{E}\left\{ g(Z)\right\} = \mathbb{E}\left\{ g_0(Z)\right\}}} F_n(\Lambda, g).
\end{equation}
Note that $\mathbb{P}(d(\hat{\theta}^*, \theta_0) < \infty) = 1$. Thus, it suffices to show that $d(\hat{\theta}_D^*, \theta_0) \overset{p}{\to} 0$ as $n \to \infty$ for some large enough $D$, which can be established by verifying the three conditions of Theorem 5.7 in \cite{Vaart2000}.

First, we need to verify
\begin{equation}
\label{equaA2}
\sup_{\theta \in \mathcal{M}_D \times \mathcal{G}_D} |F_n(\theta) - F(\theta)| \overset{p}{\to} 0.
\end{equation}

It suffices to show that $\mathcal{L}_n = \left\{ l^w(\theta, O^{\varphi }) = wl(\theta, O) : \theta \in \mathcal{M}_D \times \mathcal{G}_D \right\}$ is $P$-Glivenko-Cantelli. Since the exponential and logarithmic functions are Lipschitz continuous on bounded sets, and $G_{\theta}(t,Z)$ is bounded under the given regularity conditions, it follows that $l^{w}(\theta ,O^{\varphi })$ satisfies a Lipschitz condition with respect to $\theta=(\Lambda , g)$. By Theorem 2.7.11 in \cite{Vaart1996}, it remains to show that the bracketing number of $\mathcal{M}_D \times \mathcal{G}_D$ is finite. Noting that $\mathcal{M}_D$ is a class of monotone functions, invoking Lemma 6 in \cite{Zhong2022Deep}, and further applying Lemma 9.25 in \cite{Kosorok2008}, we conclude that $\mathcal{L}_n = \left\{ l^w(\theta, O^{\varphi }): \theta \in \mathcal{M}_D \times \mathcal{G}_D \right\}$ is $P$-Glivenko-Cantelli.

We now verify the second condition. According to Gibbs inequality, we obtain 
\[
\sup_{\substack{\theta: d(\theta, \theta_0) > \epsilon,\\ \mathbb{E}\left\{ g(Z)\right\} = \mathbb{E}\left\{ g_0(Z)\right\}}} F(\theta) \leq  F(\theta_0)
\]
for all $\theta \in \Theta_I=\left\{ \theta=(\Lambda, g) \in \mathcal{M}_D \times \mathcal{G}_D : \mathbb{E}\left\{ g(Z)\right\} = \mathbb{E}\left\{ g_0(Z)\right\} \right\}$. 

If 
\[
\sup_{\substack{\theta: d(\theta, \theta_0) > \epsilon,\\ \mathbb{E}\left\{ g(Z)\right\} = \mathbb{E}\left\{ g_0(Z)\right\}}} F(\theta) =  F(\theta_0)
\]
holds for some $\theta \in \Theta_I$, then there exists a sequence $\theta_m$ such that 
\[
F(\theta_m) \to \sup_{\substack{\theta: d(\theta, \theta_0) > \epsilon,\\ \mathbb{E}\left\{ g(Z)\right\} = \mathbb{E}\left\{ g_0(Z)\right\}}} F(\theta) = F(\theta_0) 
\]
and $d(\theta_m, \theta_0) > \epsilon$. Since the coefficients of the Bernstein polynomials and the parameters of the neural network are bounded, there exists a subsequence $\theta_{m'}$ of $\theta_m$, converging to $\theta_{m0}$. Because $F(\theta)$ is a continuous function of $\theta$, $F(\theta_{m0}) = F(\theta_0)$, and by the identifiability of the proposed model under the assumptions that $\Lambda$ and $G$ belong to monotone function classes and $\mathbb{E}\left\{ g(Z)\right\} = \mathbb{E}\left\{ g_0(Z)\right\}$, it follows that $\theta_{m0} = \theta_0$. However, $\theta_{m'}$ does not converge to $\theta_0$ due to the fact that $d(\theta_{m'}, \theta_0) > \epsilon$. This conflicts with the aforementioned result that $\theta_{m'}$ converges to $\theta_{m0}$. Therefore, we conclude that
\begin{equation}
\label{equaA3}
\sup_{\substack{\theta: d(\theta, \theta_0) > \epsilon,\\ \mathbb{E}\left\{ g(Z)\right\} = \mathbb{E}\left\{ g_0(Z)\right\}}} F(\theta) < F(\theta_0).
\end{equation}

Finally, we verify the third condition. Note that for any $\theta_1,\theta_2 \in \mathcal{M}_D \times \mathcal{G}_D$, it follows from the mean value theorem and some algebraic manipulations that
\begin{equation}
\label{equaA4}
\begin{split}
\mathbb{P}\left\{ l^w(\theta_1, O^{\varphi }) - l^w(\theta_2, O^{\varphi }) \right\}^{2} 
&= \mathbb{E}\left\{ l^w(\theta_1, O^{\varphi }) - l^w(\theta_2, O^{\varphi }) \right\}^{2}\\
&\lesssim \mathbb{E}\big\{ 
\Delta_{L} \left [ \log \frac{1 - \exp\left(-G_{\theta_1}(R,Z)\right)}{1 - \exp\left(-G_{\theta_2}(R,Z)\right)}\right ]^2 \\
&\quad + \Delta_{I}\left [ \log\frac{\exp\left(-G_{\theta_1}(L,Z)\right)  - \exp\left(-G_{\theta_1}(R,Z)\right)}{\exp\left(-G_{\theta_2}(L,Z)\right)  - \exp\left(-G_{\theta_2}(R,Z)\right)}\right ]^2 \\
&\quad + (1 - \Delta_{L} - \Delta_{I})\left [ G_{\theta_1}(L,Z) - G_{\theta_2}(L,Z)\right ]^2
\big\}\\
&\lesssim \mathbb{E}\left\{ \left [ G_{\theta_1}(L,Z) - G_{\theta_2}(L,Z)\right ]^{2} + \left [ G_{\theta_1}(R,Z) - G_{\theta_2}(R,Z)\right ]^{2}\right\}\\
&\lesssim \mathbb{E}\left\{ \left [ \Lambda_1(L) - \Lambda_2(L) \right ]^{2} + \left [ \Lambda_1(R) - \Lambda_2(R) \right ]^{2} + \left [ g_1(Z) - g_2(Z) \right ]^{2}\right\}\\
&= d^2(\theta_1, \theta_2).
\end{split}
\end{equation}

The Cauchy-Schwarz inequality yields
\begin{equation}
\label{equaA5}
\left |\mathbb{E}\left\{ l^w(\theta_1, O^{\varphi }) - l^w(\theta_2, O^{\varphi })\right\} \right |\leq \left [ \mathbb{E}\left\{ l^w(\theta_1, O^{\varphi }) - l^w(\theta_2, O^{\varphi }) \right\}^{2} \right ]^{\frac{1}{2}}\lesssim d(\theta_1, \theta_2).
\end{equation}

Define $g_{n0} = \underset{g \in \mathcal{G}(H, s, \boldsymbol{p},D/2)}{\arg \min} \|g - g_{0}\|_{L^2}$. By the proof of Theorem 1 in \cite{Schmidt-Hieber2020deep}, we have $\|g_{n0} - g_0\|_{L^2} = O(\gamma_n \log^2 n)$. Let $g_{n0}^* = g_{n0} - \mathbb{E}\{g_{n0}(Z)\}$. Clearly, $g_{n0}^* \in \mathcal{G}_D$ and $\|g_{n0}^* - g_0\|_{L^2} = \|g_{n0} - g_0 - \mathbb{E}\{g_{n0}(Z) - g_0(Z)\}\|_{L^2} \lesssim \|g_{n0} - g_0\|_{L^2} = O(\gamma_n \log^2 n).$ Furthermore, Theorem 1.6.2 of \cite{Lorentz1986} states that there exists a Bernstein polynomial $\Lambda_{n0}$ such that $\|\Lambda_{n0} - \Lambda_0\|_\infty = O(m^{-r/2}) = O(n^{-r\nu/2})$, which in turn implies $\|\Lambda_{n0} - \Lambda_0\|_2 = O(n^{-r\nu /2})$. Then, by \eqref{equaA2}, \eqref{equaA5} and the law of large numbers, we have
\begin{equation*}
\begin{split}
|F_n(\Lambda_{n0}, g_{n0}^*) - F_n(\Lambda_0, g_0)| \leq& |F_n(\Lambda_{n0}, g_{n0}^*) - F(\Lambda_{n0}, g_{n0}^*)| + |F(\Lambda_{n0}, g_{n0}^*) - F(\Lambda_0, g_{n0}^*)| \\
& + |F(\Lambda_0, g_{n0}^*) - F(\Lambda_0, g_0)| + |F(\Lambda_0, g_0) - F_n(\Lambda_0, g_0)|\\ 
=& o_p(1).
\end{split}
\end{equation*}

Since $\hat{\theta}_D^*$ is the maximizer of \eqref{equaA1}, we obtain $F_n(\hat{\Lambda}_D^*, \hat{g}_D^*) \geq F_n(\Lambda_{n0}, g_{n0}^*) = F_n(\Lambda_0, g_0) - o_p(1)$, which gives
\begin{equation}
\label{equaA6}
F_n(\hat{\theta }_D^*) \geq F_n(\theta _0) - o_p(1).
\end{equation}
Therefore, the conditions of Theorem 5.7 in \cite{Vaart2000} follow from \eqref{equaA2}, \eqref{equaA3}, and \eqref{equaA6}, which implies that $d(\hat{\theta}_D^*, \theta_0) \overset{p}{\to} 0$ as $n \to \infty$.
\end{proof}

\begin{proof}[Proof of Theorem~\ref{th2}]
We prove this theorem by applying Theorem 3.4.1 of \cite{Vaart1996}. Define $\theta_{n0}=(\Lambda_{n0}, g_{n0}^*)$. From the proof of Theorem 1 it follows that $d(\theta_0, \theta_{n0}) = O(n^{-r\nu/2} + \gamma_n \log^2 n)$. For any $\delta > 0$, let $\mathcal{A}_\delta = \{ \theta  = (\Lambda , g) \in \mathcal{M}_D \times \mathcal{G}_D : \delta /2 <  d(\theta, \theta_{n0}) \leq \delta \}$. One can easily show that 
\begin{equation*}
F(\theta_0) - F(\theta_{n0}) = \mathbb{P}\{l^w(\theta_0, O^\varphi) - l^w(\theta_{n0}, O^\varphi)\} \lesssim d(\theta_0, \theta_{n0}) \lesssim n^{-r\nu/2} + \gamma_n \log^2 n.
\end{equation*}
By Condition (C6), we have for large $n$, 
\begin{equation*}
F(\theta) - F(\theta_{n0}) = F(\theta) - F(\theta_{0}) + F(\theta_{0}) - F(\theta_{n0}) \leq -C\delta ^2 + C(n^{-r\nu/2} + \gamma_n \log^2 n) = -C\delta^2,
\end{equation*}
for any $\theta \in \mathcal{A}_\delta$, which implies
\begin{equation*}
 \sup_{\theta \in \mathcal{A}_\delta} [F(\theta) - F(\theta_{n0})] \lesssim -\delta^2.
\end{equation*}

By Lemma~\ref{lem1}, we know that
\begin{equation*}
\mathbb{E}^* \sup_{\theta \in \mathcal{A}_\delta} \sqrt{n} |(F_n - F)(\theta) - (F_n - F)(\theta_{n0})| \lesssim \phi_n(\delta),
\end{equation*}
where $\phi_n(\delta) = \delta\sqrt{s\log\frac{Q}{\delta}} + \frac{s}{\sqrt{n}}\log\frac{Q}{\delta}$ with $Q = H \prod_{h=0}^{H} (p_h + 1) \sum_{h=0}^{H} p_h p_{h+1}$. It is easy to see that $\phi_n(\delta)/\delta$ is decreasing in $\delta$. 

Denote $r_n = \gamma_n \log^2 n$. By Condition (C1), it is clear that  
\begin{equation*}
r_n^{-2} \phi_n(r_n) \leq \sqrt{n}.
\end{equation*}

Finally, note that $F_n(\hat{\theta}_D^*) - F_n(\theta_{n0})\geq 0$ and $d(\hat{\theta}_D^*, \theta_{n0}) \leq d(\hat{\theta}_D^*, \theta_0) + d(\theta_0, \theta_{n0}) \to 0$ in probability. Hence, by applying Theorem 3.4.1 of \cite{Vaart1996}, we have 
\begin{equation*}
d(\hat{\theta}_D^*, \theta_{n0}) = O_p(r_n).
\end{equation*}
This gives $d(\hat{\theta}^*, \theta_{n0}) = O_p(r_n)$, together with $d(\theta_{n0}, \theta_0) = O(n^{-r\nu/2} + \gamma_n \log^2 n)$, yields that
\begin{equation*}
d(\hat{\theta}^*, \theta_{0}) = O_p(n^{-r\nu/2} + \gamma_n \log^2 n).
\end{equation*}

Furthermore, we know $\|\hat{g}^* - g_{n0}^*\|_{L^2} = O_p(r_n)$, together with $\|g_{n0}^* - g_0\|_{L^2} = O(\gamma_n \log^2 n)$, yields that
\begin{equation*}
\|\hat{g}^* - g_{0}\|_{L^2} = O_p(\gamma_n \log^2 n).
\end{equation*}
\end{proof}

\begin{proof}[Proof of Theorem~\ref{th3}]
Let $P_{(\Lambda_0, g_0)}$ be the probability distribution determined by the cumulative baseline hazard function $\Lambda_0$ and nonparametric function $g_0$. Denote $\mathcal{P}_0 = \{P_{(\Lambda_0, g_0)} : \Lambda_0 \in \mathcal{M}_0, g_0 \in \mathcal{H}_0\}$ and $\mathcal{P}_1 = \{P_{(\Lambda_0, g_0)} : \Lambda_0 \in \mathcal{M}_1, g_0 \in \mathcal{H}_1\}$, where $\mathcal{M}_1 = \{\Lambda \in \mathcal{M}_0 : \Lambda(u) - \Lambda(c) = 1\}$ and $\mathcal{H}_1 = \mathcal{H}(L, \boldsymbol{\alpha}, \boldsymbol{d}, \tilde{\boldsymbol{d}}, B/2)$. For any $(\Lambda_1, g_1) \in \mathcal{M}_1 \times \mathcal{H}_1$, it holds that $P_{(\Lambda_1, g_1)} \overset{d}{=} P_{(\Lambda_1 \exp(\mu), g_1 - \mu)}$ and $P_{(\Lambda_1 \exp(\mu), g_1 - \mu)} \in \mathcal{P}_0$, where $\mu = \mathbb{E}\{g_1(Z)\}$ and $P_1 \overset{d}{=} P_2$ means $P_1$ and $P_2$ have the same probability measure. In other words, $\mathcal{P}_1$ can be viewed as a subset of $\mathcal{P}_0$. Moreover, if $\hat{g}_1$ is an estimator of $g_1 \in \mathcal{H}_1$ based on the observed data $\left\{K_i,U_{i1},...,U_{iK_i},\Delta_{i1},...,\Delta_{iK_i},\varphi_i  Z_i,\varphi_i \right\}, i = 1, \ldots , n$ under some model $P_{(\Lambda_1, g_1)} \in \mathcal{P}_1$, then $\hat{g}_0 := \hat{g}_1 - \mu$ with $\mu = \mathbb{E}\{g_1(Z)\}$ is also an estimator of $g_0 := g_1 - \mu$ based on same copies of the observed data under $P_{(\Lambda_1 \exp(\mu), g_0)} (\overset{d}{=}P_{(\Lambda_1, g_1)}) \in \mathcal{P}_0$. It follows directly that $\hat{g}_1 - g_1 = \hat{g}_0 - g_0$, and therefore
\begin{equation}
\label{equaA7}
\begin{split}
\inf_{\hat{g}_0} & \sup_{(\Lambda_0, g_0) \in \mathcal{M}_0 \times \mathcal{H}_0} \mathbb{E}_{P_{(\Lambda_0, g_0)}} \left\{ \hat{g}_0(Z) - g_0(Z) \right\}^2 \\
&\geq \inf_{\hat{g}_1} \sup_{(\Lambda_1, g_1) \in \mathcal{M}_1 \times \mathcal{H}_1} \mathbb{E}_{P_{(\Lambda_1, g_1)}} \left\{ \hat{g}_1(Z) - g_1(Z) \right\}^2,
\end{split}
\end{equation}
where $\mathbb{E}_P$ is the expectation under the distribution $P$ and the infimum is taken over all possible estimators $\hat{g}_0$ and $\hat{g}_1$ based on the observed data under the probabilities in $\mathcal{P}_0$ and $\mathcal{P}_1$, respectively.

Consequently, it suffices to derive a lower bound for the right-hand side of \eqref{equaA7}, which simultaneously provides a lower bound for the left-hand side of \eqref{equaA7}.

For $\Lambda_0 \in \mathcal{M}_1$ and $ g^{(0)},  g^{(1)} \in \mathcal{H}_1$, let $\tilde{P}_0$ and $\tilde{P}_1$ be the joint probability distribution of the observed data $\left\{K_i,U_{i1},...,U_{iK_i},\Delta_{i1},...,\Delta_{iK_i},\varphi_i  Z_i,\varphi_i \right\}, i = 1, \ldots , n$ under $P_{(\Lambda_0, g^{(0)})}$ and $P_{(\Lambda_0, g^{(1)})}$, respectively. Correspondingly, let $P_0$ and $P_1$ denote the joint probability distributions of the complete data under $P_{(\Lambda_0, g^{(0)})}$ and $P_{(\Lambda_0, g^{(1)})}$, respectively.

The Kullback-Leibler distance between $P_1$ and $P_0$ is
\begin{align*}
\text{KL}(P_1, P_0) &= \mathbb{E}_{P_1} \log \frac{P_1}{P_0} = \mathbb{E}_{\tilde{P}_1} w\log \frac{P_1}{P_0} \\
&= n \mathbb{E}_{\tilde{P}_1} w \big\{ [1 - \exp(-G_{\theta^{(1)}}(U_1,Z))] \log \left[ \frac{1 - \exp(-G_{\theta^{(1)}}(U_1,Z))}{1 - \exp(-G_{\theta^{(0)}}(U_1,Z))} \right] \\
&\quad + \sum_{k=2}^{K}[\exp(-G_{\theta^{(1)}}(U_{k-1},Z)) - \exp(-G_{\theta^{(1)}}(U_k,Z))] \\
&\quad \times \log \left[ \frac{\exp(-G_{\theta^{(1)}}(U_{k-1},Z)) - \exp(-G_{\theta^{(1)}}(U_k,Z))}{\exp(-G_{\theta^{(0)}}(U_{k-1},Z)) - \exp(-G_{\theta^{(0)}}(U_k,Z))} \right] \\
&\quad + \exp(-G_{\theta^{(1)}}(U_K,Z)) \log \left[ \frac{\exp(-G_{\theta^{(1)}}(U_K,Z))}{\exp(-G_{\theta^{(0)}}(U_K,Z))} \right] \big\} \\
&= n \mathbb{E}_{\tilde{P}_1} w \big\{ [1 - \exp(-G_{\theta^{(0)}}(U_1,Z))] h \left[ \frac{1 - \exp(-G_{\theta^{(1)}}(U_1,Z))}{1 - \exp(-G_{\theta^{(0)}}(U_1,Z))} \right] \\
&\quad + \sum_{k=2}^{K}[\exp(-G_{\theta^{(0)}}(U_{k-1},Z)) - \exp(-G_{\theta^{(0)}}(U_k,Z))] \\
&\quad \times h \left[ \frac{\exp(-G_{\theta^{(1)}}(U_{k-1},Z)) - \exp(-G_{\theta^{(1)}}(U_k,Z))}{\exp(-G_{\theta^{(0)}}(U_{k-1},Z)) - \exp(-G_{\theta^{(0)}}(U_k,Z))} \right] \\
&\quad + \exp(-G_{\theta^{(0)}}(U_K,Z)) h \left[ \frac{\exp(-G_{\theta^{(1)}}(U_K,Z))}{\exp(-G_{\theta^{(0)}}(U_K,Z))} \right] \big\},
\end{align*}
where $\theta^{(0)} = (\Lambda_0 , g^{(0)})$, $\theta^{(1)} = (\Lambda_0, g^{(1)})$, $G\left(\Lambda(U_k) \exp(g(Z))\right)=G_{\theta }(U_k,Z)$ and $h(x) = x \log x - x + 1$. Since $h(x) \leq (x - 1)^2$ for all $x>0$, it follows from the mean value theorem and some algebraic manipulations that
\begin{equation}
\label{equaA8}
\text{KL}(P_1, P_0) \leq Cn \|g^{(1)} - g^{(0)}\|^2_{L^2}.
\end{equation}

By the proof of Theorem 3 in \cite{Schmidt-Hieber2020deep}, there exist $g^{(0)}, \ldots, g^{(M)} \in \mathcal{H}_1$ and constant $C_1, C_2 > 0$, such that
\begin{equation}
\label{equaA9}
\|g^{(j)} - g^{(k)}\|_{L^2} \geq 2C_1 \gamma_n > 0
\end{equation}
and
\begin{equation}
\label{equaA10}
\frac{Cn}{M} \sum_{j=1}^{M} \|g^{(j)} - g^{(0)}\|^2_{L^2} \leq C_2 \log M.
\end{equation}

Combining \eqref{equaA8}, \eqref{equaA9}, and \eqref{equaA10}, it follows from Theorem 2.5 in \cite{Tsybakov2009} that
\begin{equation*}
\inf_{\hat{g}_1} \sup_{g_1 \in \mathcal{H}_1} \mathbb{P} \left( \|\hat{g}_1 - g_1\|_{L^2} \geq C_1 \gamma_n \right) \geq \frac{\sqrt{M}}{1 + \sqrt{M}} \left( 1 - 2C_2 - \sqrt{\frac{2C_2}{\log M}} \right).
\end{equation*}
This establishes that
\begin{equation*}
\inf_{\hat{g}_1} \sup_{(\Lambda_1, g_1) \in \mathcal{M}_1 \times \mathcal{H}_1} \mathbb{E}_{P_{(\Lambda_1, g_1)}} \left\{ \hat{g}_1(Z) - g_1(Z) \right\}^2 \geq C_3 \gamma_n^2,
\end{equation*}
for some constant $0 < C_3 < \infty$. Therefore, the proof is completed.
\end{proof}

The next lemma serves as an auxiliary result in the proof of Theorem~\ref{th2}.
\begin{lemma}
\label{lem1}
Let $\mathcal{B}_\delta = \{\theta  = (\Lambda , g) \in \mathcal{M}_D \times \mathcal{G}_D : \|\Lambda - \Lambda_{n0}\|_{2} \leq \delta, \|g - g_{n0}^*\|_{L^2} \leq \delta\}$. Define $\mathbb{G}_n = \sqrt{n}(\mathbb{P}_n - \mathbb{P})$, then
\begin{equation*}
\mathbb{E}^* \sup_{\theta \in \mathcal{B}_\delta} \left| \mathbb{G}_n\{l^{w}(\theta ,O^{\varphi }) - l^{w}(\theta_{n0} ,O^{\varphi })\} \right| = O \left( \delta \sqrt{s \log \frac{Q}{\delta}} + \frac{s}{\sqrt{n}} \log \frac{Q}{\delta} \right),
\end{equation*}
where $\mathbb{E}^*$ is the outer measure and $Q = H \prod_{h=0}^{H} (p_h + 1) \sum_{h=0}^{H} p_h p_{h+1}$.
\end{lemma}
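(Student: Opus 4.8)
The plan is to recognize Lemma~\ref{lem1} as a standard modulus-of-continuity estimate for the empirical process $\mathbb{G}_n$ indexed by the localized class $\mathcal{F}_\delta := \{l^{w}(\theta,\cdot) - l^{w}(\theta_{n0},\cdot) : \theta \in \mathcal{B}_\delta\}$, and to bound $\mathbb{E}^*\|\mathbb{G}_n\|_{\mathcal{F}_\delta}$ via a bracketing maximal inequality such as Lemma 3.4.2 of \cite{Vaart1996}. That inequality gives $\mathbb{E}^*\|\mathbb{G}_n\|_{\mathcal{F}_\delta} \lesssim \tilde{J}_{[]}(\delta,\mathcal{F}_\delta,L^2(\mathbb{P}))\{1 + \tilde{J}_{[]}(\delta,\mathcal{F}_\delta,L^2(\mathbb{P}))\|F_\delta\|_{L^2(\mathbb{P})}/(\delta^2\sqrt{n})\}$, where $\tilde{J}_{[]}$ denotes the bracketing integral and $F_\delta$ an envelope of $\mathcal{F}_\delta$, valid provided $\mathbb{P} f^2 \lesssim \delta^2$ for every $f \in \mathcal{F}_\delta$. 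Thus three inputs are needed: a uniform $L^2(\mathbb{P})$ bound on the elements of $\mathcal{F}_\delta$, an envelope, and a bracketing-entropy bound for $\mathcal{F}_\delta$.

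The first two inputs come from the Lipschitz structure already exploited in the proof of Theorem~\ref{th1}. Because $w$ is bounded, $\exp$ and $\log$ are Lipschitz on the relevant bounded ranges, and $G_\theta(t,Z)$ is bounded (and, for $\delta$ small, bounded away from degeneracy) on $\mathcal{M}_D\times\mathcal{G}_D$ under Conditions (C4)--(C5), the map $\theta \mapsto l^{w}(\theta,O^{\varphi})$ is Lipschitz in $(\Lambda,g)$. Hence, by the chain of inequalities leading to \eqref{equaA4}, $\mathbb{P}\{l^{w}(\theta,O^{\varphi}) - l^{w}(\theta_{n0},O^{\varphi})\}^2 \lesssim d^2(\theta,\theta_{n0}) \le 2\delta^2$ for every $\theta\in\mathcal{B}_\delta$, and $l^{w}$ is uniformly bounded on $\mathcal{M}_D\times\mathcal{G}_D$, so $\mathcal{F}_\delta$ admits a constant envelope with $\|F_\delta\|_{L^2(\mathbb{P})} = O(1)$.

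For the bracketing entropy, the Lipschitz property reduces the problem to covering the index set $\mathcal{B}_\delta \subset \mathcal{M}_D\times\mathcal{G}_D$ in sup-norm, since an $\epsilon$-cover of the parameters produces $O(\epsilon)$-brackets for $\mathcal{F}_\delta$ in $L^2(\mathbb{P})$. The Bernstein component $\mathcal{M}_D$ is a bounded subset of an $(m+1)$-dimensional linear space, so $\log N(\epsilon,\mathcal{M}_D,\|\cdot\|_\infty) \lesssim m\log(1/\epsilon)$; for the DNN component I would invoke the covering-number bound for sparse ReLU networks (\cite{Schmidt-Hieber2020deep}; see also Lemma 6 of \cite{Zhong2022Deep}), $\log N(\epsilon,\mathcal{G}_D,\|\cdot\|_\infty) \lesssim s\log(Q/\epsilon)$ with $Q = H\prod_{h=0}^{H}(p_h+1)\sum_{h=0}^{H}p_h p_{h+1}$. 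Under Condition (C1), with $m = o(n^\nu)$ chosen so that $m\lesssim s$, the network term dominates, giving $\log N_{[]}(\epsilon,\mathcal{F}_\delta,L^2(\mathbb{P})) \lesssim s\log(Q/\epsilon)$ and therefore $\tilde{J}_{[]}(\delta,\mathcal{F}_\delta,L^2(\mathbb{P})) = \int_0^\delta\sqrt{1+\log N_{[]}(\epsilon,\mathcal{F}_\delta,L^2(\mathbb{P}))}\,d\epsilon \lesssim \delta\sqrt{s\log(Q/\delta)}$.

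Substituting $\tilde{J}_{[]}(\delta) \asymp \delta\sqrt{s\log(Q/\delta)}$ and $\|F_\delta\|_{L^2(\mathbb{P})} = O(1)$ into Lemma 3.4.2 of \cite{Vaart1996} yields $\mathbb{E}^*\|\mathbb{G}_n\|_{\mathcal{F}_\delta} \lesssim \delta\sqrt{s\log(Q/\delta)} + s\log(Q/\delta)/\sqrt{n}$, which is exactly the asserted bound. I expect the main obstacle to be the entropy step: propagating the precise architecture-dependent quantity $Q$ through the sparse-network covering-number estimate, and verifying from Condition (C1) that the Bernstein-sieve contribution is genuinely of no larger order, so that the two entropy pieces coalesce into the single term $s\log(Q/\epsilon)$. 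The Lipschitz bookkeeping for the $L^2$-bound and the envelope is routine once \eqref{equaA4} is in hand.
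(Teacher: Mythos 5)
Your proposal follows essentially the same route as the paper's proof: the same localized class with the second-moment bound $\lesssim \delta^2$ obtained from the Lipschitz/(A.4)-type argument, the same bracketing-entropy bound combining the $(m+1)$-dimensional Bernstein sieve with the sparse ReLU covering bound (Lemma 6 of Zhong et al., 2022) so that the $s\log(Q/\epsilon)$ term dominates when $m+1\leq s$, the same bracketing integral of order $\delta\sqrt{s\log(Q/\delta)}$, and the same application of Lemma 3.4.2 of van der Vaart and Wellner (1996). The only cosmetic differences are that the paper derives the Bernstein-sieve bracketing entropy from Shen and Wong (1994) together with Lemma 9.22 of Kosorok (2008) rather than from the bare finite-dimensionality argument, and invokes the uniform bound on the class (rather than the envelope's $L^2$ norm) in the maximal inequality; neither affects the conclusion.
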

\begin{proof}[Proof]
Let $\mathcal{F}_\mathcal{B}(\delta) = \{l^{w}(\theta ,O^{\varphi }) - l^{w}(\theta_{n0} ,O^{\varphi }): \theta \in \mathcal{B}_\delta\}$ and  $\|\mathbb{G}_n\|_{\mathcal{F}_\mathcal{B}(\delta)} = \sup_{f \in \mathcal{F}_\mathcal{B}(\delta)}|\mathbb{G}_n f| = \sup_{\theta \in \mathcal{B}_\delta}|\mathbb{G}_n\{l^{w}(\theta ,O^{\varphi }) - l^{w}(\theta_{n0} ,O^{\varphi })\}|$. Note that, for any $\theta_1, \theta_2 \in \mathcal{B}_\delta$, by an argument similar to that in \eqref{equaA4}, it can be shown that
\begin{equation*}
\mathbb{E}\left\{ l^w(\theta_1, O^{\varphi }) - l^w(\theta_2, O^{\varphi }) \right\}^{2} \lesssim d^2(\theta_1, \theta_2).
\end{equation*}

Define $\mathcal{M}_\delta  = \{\Lambda \in \mathcal{M}_D : \|\Lambda - \Lambda_{n0}\|_{2} \leq \delta\}$. According to \cite{Shen1994Convergence} on page 597 and Lemma 9.22 in \cite{Kosorok2008}, for any $\delta > 0$ and $0 < \epsilon < \delta$, we have $\log N_{[\,]}(\epsilon, \mathcal{M}_{\delta}, L_2(\mathbb{P})) \lesssim (m + 1) \log \frac{\delta }{\epsilon}$. Then, by Lemma 6 in \cite{Zhong2022Deep}, it follows, if $m+1 \leq s$ and $\delta \leq Q$,
\begin{equation*}
\log N_{[\,]}(\epsilon, \mathcal{F}_\mathcal{B}(\delta), L_2(\mathbb{P})) \lesssim (m+1) \log \frac{\delta}{\epsilon} + s \log \frac{Q}{\epsilon} \lesssim s \log \frac{Q}{\epsilon}.
\end{equation*}
Moreover, we obtain
\begin{equation*}
\begin{split}
J_{[\,]}(\delta, \mathcal{F}_\mathcal{B}(\delta), L_2(\mathbb{P})) &:= \int_{0}^{\delta } \sqrt{1 + \log N_{[\,]}(\epsilon, \mathcal{F}_\mathcal{B}(\delta), L_2(\mathbb{P}))}  d\epsilon \\
&\lesssim \int_{0}^{\delta} \sqrt{1 +  s \log \frac{Q}{\epsilon}}  d\epsilon \\
&= \sqrt{\frac{s}{2}} Q \int_{\sqrt{2 \log \frac{Q}{\delta}}}^{\infty} v^2 e^{-v^2/2}  dv \\
&\asymp \delta \sqrt{s \log \frac{Q}{\delta}}.
\end{split}
\end{equation*}
By Lemma 3.4.2 of \cite{Vaart1996}, it follows that
\begin{equation*}
\begin{split}
\mathbb{E}^{*}\|\mathbb{G}_n\|_{\mathcal{F}_\mathcal{B}(\delta)} &\lesssim J_{[\,]}(\delta, \mathcal{F}_\mathcal{B}(\delta), L_2(\mathbb{P}))\left\{1 + \frac{J_{[\,]}(\delta, \mathcal{F}_\mathcal{B}(\delta), L_2(\mathbb{P}))}{\delta^{2}\sqrt{n}}\right\} \\
&\lesssim \delta\sqrt{s\log\frac{Q}{\delta}} + \frac{s}{\sqrt{n}}\log\frac{Q}{\delta}.
\end{split}
\end{equation*}
\end{proof}

\end{document}